\theoremstyle{plain}
\newtheorem{theorem}{Theorem}
\theoremstyle{definition}
\newtheorem{remark}{Remark} 
\newcommand{\nwc}{\newcommand}
\nwc{\bit}{\begin{itemize}}
\nwc{\eit}{\end{itemize}}
\nwc{\Levy}{L\'evy}
\nwc{\LK}{L\'evy-Khintchine}
\nwc{\LI}{L\'evy-It\^{o}}
\nwc{\CH}{Cole-Hopf}
\nwc{\Holder}{H\"{o}lder}
\nwc{\Volpert}{Vol'pert}
\nwc{\cadlag}{c\`{a}dl\`{a}g}
\nwc{\BVloc}{BV_{loc}}
\nwc{\be}{\begin{equation}}
\nwc{\ee}{\end{equation}}
\nwc{\ba}{\begin{eqnarray}}
\nwc{\ea}{\end{eqnarray}}
\nwc{\la}{\label}
\nwc{\nn}{\nonumber}
\nwc{\Z}{\mathbb{Z}}
\nwc{\C}{\mathbb{C}}
\nwc{\E}{\mathbb{E}}
\nwc{\R}{\mathbb{R}}
\nwc{\prob}{\mathbb{P}}
\nwc{\PP}{\mathcal{P}}
\nwc{\PPE}{\mathcal{P}(E)}
\nwc{\M}{\mathcal{M}}
\nwc{\law}{\stackrel{\mathcal{L}}{\rightarrow}}
\nwc{\eqd}{\stackrel{\mathcal{L}}{=}}
\nwc{\vp}{\varphi}
\nwc{\Vp}{\Phi}
\nwc{\psilevy}{\Psi}
\nwc{\ve}{\varepsilon}
\nwc{\veps}{\varepsilon}
\nwc{\eps}{\ve}
\nwc{\cl}{c\'{a}dl\'{a}g}
\nwc{\qref}[1]{(\ref{#1})}
\nwc{\D}{\partial}
\nwc{\dnto}{\downarrow}
\nwc{\argmin}{\mathrm{arg}^+\mathrm{min}}
\nwc{\argmax}{\mathrm{arg}^+\mathrm{max}}
\nwc{\gena}{\mathcal{A}}
\nwc{\genb}{\mathcal{B}}
\nwc{\genf}{\mathcal{F}}
\nwc{\testfn}{\varphi}
\nwc{\jumpex}{n_*}
\nwc{\jump}{n}
\nwc{\Jbar}{j}
\nwc{\Kbar}{k}
\nwc{\Ai}{\mathrm{Ai}}
\nwc{\Painleve}{Painlev\'{e}}
\nwc{\pot}{U}
\nwc{\drift}{b}
\nwc{\den}{F}
\nwc{\velu}{V_y}
\nwc{\velv}{V_z}
\nwc{\genadag}{\gena^\dagger}
\nwc{\genbdag}{\genb^\dagger}
\nwc{\up}{u_+}
\nwc{\um}{u_-}
\nwc{\umh}{u_{-h}}
\nwc{\psip}{\psi_+}
\nwc{\psimh}{\psi_{-h}}
\nwc{\onefn}{$1$-point function}
\nwc{\twofn}{$2$-point function}
\nwc{\driftab}{\nu}
\nwc{\collab}{\sigma}
\nwc{\QQ}{\mathcal{Q}}
\nwc{\My}{\mathcal{M}}
\nwc{\Mf}{\mathcal{N}}
\nwc{\id}{\mathrm{Id}}
\begin{document}

\title{Kinetic theory and Lax equations for shock clustering and Burgers turbulence\\} 
\author{Govind Menon\textsuperscript{1} and Ravi Srinivasan\textsuperscript{2}}

\date{\today}

\maketitle

\begin{abstract}
We study shock statistics in the scalar conservation law $\partial_t u + \partial_x f(u)=0$, $x \in \R$, $t>0$, with a convex flux $f$ and spatially random initial data. We show that the Markov property (in $x$) is preserved for a large class of random initial data (Markov processes with downward jumps and derivatives of \Levy\/ processes with downward jumps).   The kinetics of shock clustering is then described completely by an evolution equation for the generator of the Markov process $u(x,t)$, $x\in \R$. We present four distinct derivations for this evolution equation, and show that it takes the form of a Lax pair.   The Lax equation admits a spectral parameter as in~\cite{Manakov}, and has remarkable exact solutions for Burgers equation ($f(u)=u^2/2$). This suggests the kinetic equations of shock clustering are completely integrable.
\end{abstract}

\smallskip
\noindent
{\bf MSC classification:} 60J75, 35R60, 35L67, 82C99

\smallskip
\noindent
{\bf Keywords:} Shock clustering, stochastic coalescence, kinetic theory, integrable systems, Burgers turbulence.

\medskip
\noindent
\footnotetext[1]
{Division of Applied Mathematics, Box F, Brown University, Providence, RI 02912.
Email: menon@dam.brown.edu}
\footnotetext[2]{Department of Mathematics,  The University of Texas at Austin,
Austin, TX 78712, 
Email: rav@math.utexas.edu}

\section{Introduction}
\label{sec:intro}
We consider the scalar conservation law
\be
\label{eq:sc1}
\partial_t u + \partial_x f(u) =0, \quad x \in \R, t>0, \quad u(x,0)=u_0(x),
\ee
with a strictly convex, $C^1$ flux $f$ and initial data $u_0$ that is a stochastic process in $x$.  The basic model of this type was introduced by Burgers in his study of turbulence. He considered $f(u)=u^2/2$ and white noise initial data~\cite{Burgers,vonNeumann,Woc}.  While this model fails to describe turbulence in incompressible fluids, it still serves as a widely useful benchmark for theoretical methods and computations in turbulence. It also has fascinating links with combinatorics, mathematical physics and statistics, some of which we describe below.
  
Our main contribution in this article is to develop a consistent kinetic theory that describes completely the clustering of shocks and the statistics of the random process $u(x,t)$ for convex $f$, and initial data $u_0$ that are Markov processes in $x$ with only downward jumps. Our approach includes exact solutions to Burgers turbulence as an important special case. In addition to being comprehensive, our approach clarifies the essential features of the problem, and reveals the role of a Lax pair that describes the shock statistics. In order to describe these results and place them in the context of past work, let us first describe the structure of solutions to \qref{eq:sc1} for deterministic $u_0$, and then some important exact solutions to Burgers turbulence that motivated our work.

\subsection{The Hopf-Lax formula} Let us first recall the notion of the {\em entropy solution\/} to \qref{eq:sc1}~\cite{Dafermos,Lax-book}. Characteristics  for \qref{eq:sc1} are lines in space-time along which $u$ is constant. If a unique characteristic connects $(a,0)$ to $(x,t)$ then $u(x,t)=u_0(a)$. However, there may be many characteristics that pass through $(x,t)$. Well-posedness is resolved by adding a small dissipative term $\eps u_{xx}$ to the right-hand side of \qref{eq:sc1} and passing to the limit $\eps \dnto 0$. This  was first carried out by Hopf in his pioneering work on Burgers equation~\cite{Hopf}, and later generalized to convex $f$ by Lax~\cite{Lax}. Let $f^*(s) = \sup_{u \in \R } \left(us -f(u)\right)$ denote the Legendre transform of $f$ and call $\pot_0(s) =\int_0^s u_0(r) dr$ the {\em initial potential}. We define the {\em{Hopf-Lax functional}}
\begin{equation}
\label{eq:hl1} 
I ( s ; x, t ) = \pot_0 ( s ) + t f^* \left(
  \frac{x - s}{t} \right).
\end{equation}
The `correct' characteristic through $(x,t)$ is given by the variational principle
\begin{equation}
  \label{eq:hl2}
a ( x, t ) = \sup \left\{ s \in
  \mathbb{R} : I ( s ; x, t ) = \inf_{r \in \mathbb{R}} I ( r ; x, t )
  \right\},
\end{equation}
We will always assume that $\pot_0$ has no upward jumps and $I$ satisfies the growth condition 
\be
\label{eq:hlgrowth}
\lim_{|s| \to \infty} I(s; x,t) = \infty.
\ee
This ensures the infimum of $I$ is a minimum, and that $a(x,t)$ is finite.  We write
\be
\label{eq:hl3}
a (x,t) = \argmin_{s \in \mathbb{R}} I ( s ; x, t ) . 
\ee
This is the {\em Hopf-Lax formula\/} for the {\em inverse Lagrangian function\/} $a(x,t)$. The $^+$ in \qref{eq:hl3} denotes that we choose $a(x,t)$ to be the largest location if $I$ is minimized at more than one point. Of particular importance is Burgers equation with $f(u)=u^2/2$. In this case, \qref{eq:hl3} is called the Cole-Hopf formula, and takes the form
\be
\label{eq:hl-burgers}
a(x,t)=\argmin_{s \in \R}\left\{\pot_0(s) + \frac{(x-s)^2}{2t}\right\}, \quad u(x,t)=\frac{x-a(x,t)}{t}.
\ee

For fixed $t$, $a$ defined by \qref{eq:hl3} is non-decreasing in $x$. If $x$ is a point of continuity of $a(\cdot,t)$, the velocity field is given implicitly by 
\be
\label{eq:hl4}
f'\left(u(x,t)\right) = \frac{x-a(x,t)}{t}.
\ee
$u(x,t)$ is well-defined because $f'$ is continuous and strictly increasing. In particular, if $a(x,t)$ is constant for $x$ in an interval, we obtain a {\em rarefaction wave}. $a$ may also have upward jumps. These arise if the minimum of $I$ is attained at more than one point. Jumps in $a$ give rise to {\em shocks} in $u$.  The left and right limits $u_\pm = u(x_\pm,t)$ exist and the velocity of the shock is given by the {\em Rankine-Hugoniot condition\/}
\be
\label{eq:hl5}
u(x,t) = \frac{f(\um)-f(\up)}{\um-\up} =:[f]_{\um,\up}.
\ee
We stress that the entropy solution $u(x,t)$ has only {\em downward jumps} in $x$. This will play a key role in our analysis.

\subsection{Exact solutions for Burgers turbulence}
Let us now suppose $u_0$ is a stochastic process in $x$. The solution develops shocks that are separated by rarefaction waves (see for example, the computations in~\cite{She}). The shocks move with speeds given by the Rankine-Hugoniot condition, and cluster when they meet. The Lax equation we derive describes this process. In order to explain this, we first focus on two exact solutions for Burgers turbulence. 

\subsubsection*{\Levy\/ process initial data on a half-line}
\label{subsec:brownian}
We assume $f(u)=u^2/2$ and 
\be
\label{eq:bminit}
u_0(x) = \left\{ \begin{array}{ll} 0, & x \leq 0, \\ X_x, & x >0, \end{array} \right.
\ee
where $X$ is a \Levy\/ processes with only downward jumps (a {\em spectrally negative\/} \Levy\/ process). A particularly interesting case is when $X$ is a standard Brownian motion (that is, a Brownian motion with $\E (X_x^2)=x$). This problem was solved formally by Carraro and Duchon~\cite{Duchon0,Duchon1} and rigorously by Bertoin~\cite{B_burgers}. The key to their solution is a {\em closure property\/} of \qref{eq:hl-burgers}: if $u_0(x)$, $x>0$ is a \Levy\/ processes with only downward jumps, then so is $u(x,t)-u(0,t)$, $x>0$. These processes are characterized by their Laplace exponent $\psi(q,t)$ defined by
\be
\label{eq:laplace}
\E\left(e^{q\left(u(x,t)-u(0,t)\right)}\right) = e^{x \psi(q,t)}, \quad x,q,t >0.
\ee
Thus, the problem is reduced to determining the evolution of $\psi$. Remarkably, $\psi(q,t)$ satisfies Burgers equation in the new variables $q,t$!
\be
\label{eq:burgers}
\partial_t \psi+ \psi \partial_q \psi =0, \quad \psi(q,0)=\psi_0(q).
\ee
If $X$ is a Brownian motion, $\psi_0(q)=q^2$, and we obtain the self-similar solution
\be
\label{eq:burgers-sol}
\psi(q,t)= \frac{1}{t^2}\psi_*(qt), \quad \psi_*(q) = q +\frac{1}{2}-\sqrt{q + \frac{1}{4}}.
\ee
Various explicit formulas are summarized in \cite{MP4}. 

The interpretation of \qref{eq:burgers} in terms of shock clustering is as follows. $\psi$ satisfies the celebrated \LK\/ formula
\be
\label{eq:LK}
\psi(q,t) = \int_0^\infty \left( e^{-qs} -1 + qs \right) \Lambda_t(ds), \quad q \geq 0, t>0,
\ee
where $\Lambda_t(ds)$ is the jump measure at time $t$. The evolution of $\psi$ by \qref{eq:burgers} also induces evolution of the jump measure $\Lambda_t$. But the jumps in the process $u(\cdot,t)$ are the shocks, which evolve in a simple manner: shocks move at speed given by the Rankine-Hugoniot condition \qref{eq:hl5} and stick, conserving momentum, when they meet. Equation \qref{eq:burgers} captures this process. It is equivalent to the fact that $\Lambda_t$
satisfies a kinetic equation, {\em Smoluchowski's coagulation equation with additive kernel}, which describes the evolution and coalescence of shocks. A derivation of the kinetic equation from this perspective may be found in~\cite{MP4}. An excellent survey of several links between stochastic coalescence and Burgers turbulence is~\cite{B_icm}.

\subsubsection*{White noise initial data}
Here we must characterize the law of $u(x,t)$ when the initial velocity is white noise. Precisely, let us suppose that the initial potential $\pot_0(x) = \sigma B_x$ where $B$ is a standard two-sided Brownian motion pinned at the origin and $\sigma$ a fixed scale parameter.  This problem  also arises in statistics, and it was in this context that Groeneboom first characterized the law of the process $u(x,t)$, $x \in \R$~\cite{Groeneboom}.  He showed that for every $t>0$, $u(x,t)$, $x \in \R$ is a Markov process with only downward jumps (a {\em spectrally negative Markov process\/}) that is stationary in $x$. He then computed the generator of this Markov process explicitly in terms of Airy functions.  Here is a brief summary of his solution.

By \qref{eq:hl-burgers} and the scaling invariance of Brownian motion, we see that 
\be
\label{eq:bscaling}
a_\sigma(x,t) \eqd (\sigma t)^{\frac{2}{3}} a_1\left(x(\sigma t)^{-\frac{2}{3}},1\right), \quad u_\sigma(x,t) \eqd \sigma^{\frac{2}{3}}t^{-\frac{1}{3}}u_1\left(x(\sigma t)^{-\frac{2}{3}},1\right),
\ee
where $\eqd$ denotes equality in law and the subscript $\sigma$ refers to the variance of $\pot_0$.  It is simplest to state the formulas under the assumption that $\sigma^2=1/2$. Let $u_x$ denote the process $u_{2^{-1/2}}(x,1)$. The  generator of  $u_x$ is an operator defined by its action on a test function $\testfn$ in its domain as follows:
\be
\label{eq:gendefn}
\gena \testfn (y) = \lim_{x \dnto 0} \frac{ \E_y \left(\testfn \left(u_x\right)\right)-\testfn(y)}{x},
\ee
where $\E_y$ denotes the law of the process with $u_0=y$. Groeneboom showed that $\gena$ is an integro-differential operator of the form
\be
\label{eq:groen1}
\gena \testfn (y) = \testfn'(y) + \int_{-\infty}^y \left( \testfn(z) -\testfn(y)\right) \jumpex(y,z) \, dz. 
\ee
The jump density $\jump_*$ of the integral operator is given explicitly as follows.
\be
\label{eq:groen2}
\jumpex(y,z) = \frac{J(z)}{J(y)}K(y-z), \quad y >z,
\ee
where $J$ and $K$ are positive functions defined on the line and positive half-line respectively, whose Laplace transforms
\be
\label{eq:groen3}
\Jbar(q) =\int_{-\infty}^\infty e^{-qy} J(y)\,dy, \quad \Kbar(q) =\int_0^\infty e^{-qy} K(y)\, dy,
\ee
are meromorphic functions on $\C$ given by 
\be
\label{eq:groen4}
\Jbar(q) =\frac{1}{\Ai(q)}, \quad \Kbar(q)= -2 \frac{d^2}{dq^2}\log \Ai(q).
\ee
$\Ai$ denotes the first Airy function as defined in \cite[10.4]{AS}. Our normalization of $J$ and $K$ differs from~\cite{Groeneboom} in two aspects. First, the above definition of $K$ is suggestive of Dyson's formula in the theory of inverse scattering~\cite[p.273]{Lax_fa}. Second, the choice $\sigma^2 =1/2$ helps avoid several factors of $2^{1/3}$ while stating the main formulas.  

We shall return to this solution at several points in this article. The one-point and two-point distribution functions can be computed once the generator is known. For example, $p(y) dy= P(u_x \in (y,y+dy)) = J(y)J(-y)dy$. The distribution of the shock sizes and the two-point distribution is given in~\cite{Frachebourg}. Tauberian arguments yield precise asymptotics of these distributions.

\subsection{Lax equations}
Despite the elegance of the solution procedure for \Levy\/ process data, it does not apply to Burgers equation with broader classes of random initial data (e.g. white noise), or to the general scalar conservation law \qref{eq:sc1}. Our goal in this article is to develop a kinetic theory
that describes shock clustering for \qref{eq:sc1}. Our work builds on the links between stochastic coalescence and Burgers turbulence~\cite{B_icm,MP4}, exact solutions to Burgers equations with white noise initial data~\cite{Frachebourg,Groeneboom}, and recent work on kinetic equations for Burgers turbulence~\cite{CD}. We amplify these remarks briefly.

Groeneboom used the variational principle \qref{eq:hl-burgers} at $t=1/2$ to compute the generator as in \qref{eq:groen1}--\qref{eq:groen3}. Girsanov's theorem is crucial in the analysis. 
Most past work in the turbulence community is also based on a similar point of view. This begins with Burgers' analysis which was eventually completed by Frachebourg and Martin~\cite{AE,Burgers,Frachebourg,Kida}. However, the focus on white noise initial data and the use of Girsanov's theorem obscures an understanding of the dynamic process of shock clustering for general initial data (see for example, the concluding remarks in~\cite{B_wn}). 

The idea of seeking kinetic equations for shock evolution can be found in pioneering early work of Burgers~\cite{B_corra,B_corrb}, but this approach remained undeveloped for several decades. More recently, E and Vanden-Eijnden introduced a hierarchy of master equations for the statistics of velocities and velocity gradients in Burgers turbulence~\cite{EvE}. However, these equations are not closed, and a sophisticated dimension reduction is needed to extract scaling exponents from these equations. Frachebourg {\em et al} showed that the hierarchy of $n$-point functions for ballistic aggregation can be closed at the level of the $2$-point function~\cite{FPM}. Most recently, Chabanol and Duchon showed formally that if a statistical solution to Burgers equation preserves the Markov property, then one can derive evolution equations for the generator of the Markov process~\cite{CD}. Moreover, they showed that Groeneboom's solution yields a self-similar solution to this evolution equation. 

This is the starting point for our work. Our viewpoint is as follows: rather than seek a specific exact solution to Burgers equation with specific initial data, we look for a class of natural stochastic processes whose structure in $x$ is preserved by the entropy solution to the general scalar conservation law~\qref{eq:sc1}. Here it is the class of {\em spectrally negative Markov processes (in $x$)}. This reduces the study of shock statistics to an evolution equation for the generator of the Markov process. Our main results are:
\begin{itemize}
\item {\em a closure theorem\/}: Suppose $f$ is strictly convex and $C^1$. If $u_0$ is a spectrally negative Markov process in $x$, so is the entropy solution to \qref{eq:sc1} (see Theorems~\ref{thm:Closure} and~\ref{thm:ClosureNoise}). This provides some  rigorous justification for our approach.

\item {\em new kinetic equations\/}: We derive kinetic equations that describe the evolution of the generator of the Markov process for arbitrary convex $f$. The kinetic equations are equivalent to a Lax pair.  Moreover, the Lax equations admit a {\em spectral parameter\/} as in Manakov's integration of the Euler equations for a spinning top in $\R^n$~\cite{Manakov}. This provides strong evidence that the kinetic equations are {\em completely integrable\/}.
\item {\em consistent derivations\/}: We derive the Lax equation from four different perspectives. Aside from being a stringent test on the consistency of the kinetic equations, this also provides a unified treatment of disparate methods in the literature on statistical hydrodynamics.
\end{itemize}

In order to explain the Lax equation, let us assume that $u(x,t)$ is a stationary, spectrally negative Feller process in $x$ whose sample paths have bounded variation.  As in the \LK\/ formula, such a process is characterized by its generator $\gena(t)$, which acts on test functions $\testfn \in C_c^{1}(\R)$ via
\be
\label{eq:gena1}
\gena\testfn(y) = \drift(y,t) \testfn'(y) + \int_{-\infty}^y \left( \testfn(z)-\testfn(y)\right) \jump(y,dz,t). 
\ee
For fixed $t>0$, $\drift(\cdot,t) \in C(\R)$ and $\jump(y,\cdot,t)$ is a measure on $(-\infty, y)$ that satisfies $\int (1 \wedge |y-z|^2) \jump(y,dz,t) < \infty$ for each $y \in \R$. These terms correspond to the drift and jumps of the process respectively. There is no diffusion term because sample paths are solutions to a conservation law and necessarily have bounded variation~\cite{Dafermos}. This expression for $\gena$ is a special case of the general form of the generator of a Markov process, as given in Theorem 3.5.3 in~\cite{Applebaum} (see also~\cite{Stroock}, Eq. (2.1.13)).

We now introduce an operator $\genb$ associated to $\gena$ and the flux function $f$. This operator is defined by its action on test functions as follows:
\be
\label{eq:genb}
\genb \testfn (y) = -f'(y)\drift(y,t) \testfn'(y) -\int_{-\infty}^y \frac{f(y)-f(z)}{y-z }\left( \testfn(z)-\testfn(y)\right) \jump(y,dz,t). 
\ee
One of our main results is that the evolution of $\gena$ is given by the Lax equation
\be
\label{eq:kinetic}
\partial_t \gena = [\gena, \genb] = \gena \genb -\genb \gena.
\ee
An evolution equation for the generator was first derived in~\cite{CD} for the flux $f(u)=u^2/2$. However, the equations in~\cite{CD} were not written in the simple form above, and it was not clear if such equations could even be derived for arbitrary fluxes $f$. We show that while $f(u)=u^2/2$ is certainly special, much of the structure of the problem relies only on the convexity of $f$. 

We present four different derivations of \qref{eq:kinetic}: (1) as a compatibility condition for martingales in $x$ and $t$; (2) from kinetic theory as in \cite{MP4}; (3) using BV calculus as in~\cite{EvE}; (4) using Hopf's functional calculus as in~\cite{CD}.  The shortest (and most heuristic) of these is the first, and goes as follows.

The main observation is that we have a two-parameter random process, and  formally $\genb$ may be viewed as the `generator' in $t$. To see this, fix $x \in \R$ and consider the random process $u(x,t)$, $t>0$. Then the multipliers $-f'(y)$ and $-[f]_{y,z}$ in \qref{eq:genb} simply correspond to the evolution (in $t$) of the drift and jumps. Indeed, if the path $u(x,t)$ is differentiable at $x$ with $u(x,t)=y$, $\partial_x u=b$, then $\partial_t u = -f'(y)b$ by \qref{eq:sc1}. Similarly, if we have a shock connecting left and right states $y$ and $z$, then the shock speed is given by \qref{eq:hl5}. 

This may be understood more precisely using It\^o's formula for jump processes \cite{Applebaum}. For fixed $t$, the random process $u(x,t)$ satisfies the stochastic differential equation 
\be
\label{eq:ito1}
du(x,t) = \drift(u(x_-))dx + \int_\R (z-u(x_-))N(u(x_-),dz,dx),
\ee
where the Poisson random measure $N(y,dz,dx)$ has intensity $\jump(y,dz)\,dx$. We write the conservation law as $du(x,t) dx + df(u(x,t)) dt = 0$, and apply It\^o's formula to obtain
\be
\nonumber
du(x,t) = -f'(u(x_-))\drift(u(x_-))dt - \int_\R [f]_{u(x_-),z} (z-u(x_-))N(u(x_-),dz,dt).
\ee
This shows immediately that $\genb$ should have drift $-f'(y)b(y)$ and jump measure $-[f]_{y,z} \jump(y,dz)$. This holds rigorously when $f$ is decreasing. In this case, all shocks move to the left, $u(x_-,t)=u(x,t_-)$,  and $u$ is also Markov in $t$, with generator $\genb$ given by \qref{eq:genb}.

If $\gena$ and $\genb$ are generators, then the one-point distributions satisfy Kolmogorov's forward equations
\be
\label{eq:forkolm}
\partial_x p = \genadag p, \quad\mathrm{and}\quad\partial_t p = \genbdag p.
\ee
We now seek martingales in $x$ and $t$. To this end, fix $(x_0,t_0)$, and consider the processes $\testfn\left(u(x_0+s,t_0)\right)$ and $\testfn\left(u(x_0,t_0+s)\right)$ with $s>0$. These processes are formally martingales if $\testfn$ solves Kolmogorov's backward equations 
\be
\label{eq:kolm1}
\partial_x \testfn +\gena \testfn=0, \quad \partial_t \testfn + \genb \testfn=0,
\ee
in the domain $(x,t) \in [x_0,\infty)\times[t_0, \infty)$, and $y,z \in \R$. If the compatibility condition $\testfn_{xt}=\testfn_{tx}$ holds for a sufficiently rich class of functions $\testfn$, we obtain the general Lax equation
\be
\label{eq:kinetic2}
\partial_t \gena -\partial_x \genb =[\gena,\genb].
\ee
If the process is stationary in $x$, $\partial_x \genb$ vanishes and we obtain \qref{eq:kinetic}. In the form \qref{eq:kinetic2}, the Lax equation is akin to {\em zero curvature\/} conditions in integrable systems.

\subsection{Kinetic theory}
\label{sec:kineticintro}
When we expand the commutator in \qref{eq:kinetic}, and separate the evolution of the drift $\drift$ and the jump measure $\jump$ we obtain a kinetic equation that describes shock clustering. The drift satisfies the differential equation
\be
\label{eq:evolb}
\partial_t \drift(y,t) = -f''(y)\drift^2(y,t).
\ee
Note that the drift does not depend on the jump measure. The jump density $\jump(y,z,t) \, dz =\jump(y,dz,t)$ satisfies the kinetic equation 
\ba
\label{eq:kin26}
\lefteqn{ \partial_t \jump(y,z,t) + \partial_y\left(\jump \velu (y,z,t)\right)+ \partial_z \left( \jump\velv(y,z,t) \right)} \\
\nonumber 
&&
= Q(\jump,\jump) + \jump \left( \left([f]_{y,z}-f'(y)\right)\partial_y \drift - \drift f''(y)\right).
\ea
Here the velocities $\velu$ and $\velv$ in \qref{eq:kin26} are given by
\be
\label{eq:kin9}
\velu(y,z,t) = \left( [f]_{y,z} -f'(y)\right) \drift(y,t), \quad \velv(y,z,t) = \left([f]_{y,z}-f'(z)\right) \drift(z,t),
\ee
and the collision kernel $Q$ is 
\ba
\nonumber
\lefteqn{ Q(\jump,\jump)(y,z,t) = \int_z^y \left([f]_{y,w}-[f]_{w,z}\right)\jump(y,w,t)\jump(w,z,t)\, dw}\\
\nonumber
&& - \int_{-\infty}^z \left([f]_{y,z}-[f]_{z,w}\right) \jump(y,z,t)\jump(z,w,t)\, dw \\
\label{eq:kin24}
 && -\int_{-\infty}^y \left([f]_{y,w}-[f]_{y,z}\right)\jump(y,z,t)\jump(y,w,t)\, dw. 
\ea
We have assumed for convenience that the jump measure has a density, but the above equations extend naturally to general jump measures. 

In the next section we derive these equations from the perspective of kinetic theory. We consider single shocks and rarefaction waves as building blocks, and use this to derive a natural Boltzmann-like equation. This will yield the equations above. We then show that these kinetic equations are equivalent to the Lax equation \qref{eq:kinetic}. That calculation reflects the fact that operators of the form~\qref{eq:gena1} formally constitute a Lie algebra.

\subsection{The broader context of our work}
We conclude this introduction by connecting our work with some other problems in mathematical physics and statistics.  We pay particular attention to connections with integrable systems. 

\subsubsection{The spectral curve and complete integrability}
We first point out the role of a spectral parameter in analogy with Manakov's treatment of the Euler equations for geodesic flow on $so(n)$ with a left-invariant metric~\cite{Manakov}. Let $\My$ and $\Mf$ denote multiplication operators acting on the domain of $\gena$, defined by
\be
\label{eq:mult_op}
\My \testfn (y) = y \testfn (y), \quad \Mf \testfn (y) = f(y) \testfn(y).
\ee
It is clear that $\My$ and $\Mf$ are diagonal operators. We now use the definitions \qref{eq:gena1}, \qref{eq:genb} and \qref{eq:mult_op} to find
\be
\label{eq:multop2}
[\gena, \Mf] -[\My,\genb]=0.
\ee
This observation allows us to introduce a spectral parameter $\mu \in \C$ in the Lax equation. We use \qref{eq:kinetic} and \qref{eq:multop2} to obtain
\be
\label{eq:multop3}
\partial_t \left(\gena -\mu \My\right) = [\gena -\mu \My, \genb +\mu \Mf], \quad \mu \in \C.
\ee
If $\gena$, $\genb$ were $n\times n$ matrices, it would follow that the spectral curve (Riemann surface)
\be
\label{eq:multop4}
\Gamma= \{ (\lambda,\mu) \in \C^2 \left| \det(\gena - \lambda \id -\mu \My) =0 \right.   \},
\ee
is fixed by the evolution. In Manakov's work, this is the crucial observation that yields the existence of additional integrals for Euler's equations in $so(n)$, $n \geq 4$. These integrals are simply the coefficients of the characteristic polynomial above. 

More broadly, the observation that \qref{eq:kinetic} admits a spectral parameter reveals a close relation with a large class of completely integrable systems (including KdV, the Toda lattice, geodesic flows on $so(n)$ and ellipsoids, and the integrable PDEs of random matrix theory). The complete integrability of all these flows may be obtained in a unified way via a general splitting theorem for Lie algebras~\cite{AdvM1}. This connection also sets the stage for the application of powerful methods from algebraic geometry to integrate \qref{eq:kinetic2} explicitly for {\em every\/} convex $f$~\cite{AdvM2}. We will address this in later work.

\subsubsection{Burgers turbulence and random matrices}
The solution to Burgers turbulence with spectrally negative \Levy\/ process data (see \S~\ref{subsec:brownian}) is obtained from \qref{eq:kinetic} as follows. Suppose $u(x,t)-u(0,t)$, $x>0$ is a spectrally negative \Levy\/ process with bounded variation and  mean zero. If we denote the jump measure $\Lambda_t$, then $\drift(t)=\int_0^\infty s \Lambda_t(ds)$, and the generators $\gena$ and $\genb$ take the form
\ba
\gena(t) \testfn(y) &= &\int_0^\infty \left(\testfn(y-s)-\testfn(y) + s \testfn'(y)\right)\Lambda_t(ds), \\
\nonumber 
\genb(t) \testfn(y)  & = & -y\gena(t) \testfn(y) +\frac{1}{2} \int_0^\infty \left(\testfn(y-s)-\testfn(y)\right) s\Lambda_t(ds).
\ea
In particular, when $\testfn(y)=e^{qy}$, $\mathrm{Re}(q)>0$, we have
\be
\label{eq:burgers-levy}
\gena(t) e^{qy} =\psi(q,t) e^{qy}, \quad \genb(t)e^{qy}=-\left(y\psi(q,t) +\frac{1}{2}\partial_q \psi\right) e^{qy}. 
\ee
We substitute \qref{eq:burgers-levy} in \qref{eq:kinetic} to obtain \qref{eq:burgers}. 

We next note that the solution \qref{eq:burgers-sol} can be mapped to Wigner's semicircle law in the theory of random matrices. Dyson observed that the eigenvalues of a standard matrix valued Brownian motion $M_t$ in the group of $n\times n$ symmetric, hermitian or symplectic matrices satisfy the stochastic differential equation~\cite{Dyson}
\begin{equation}
\label{dyson} d\lambda_k = \sum_{j \neq  k}\frac{dt}{\lambda_k-\lambda_j} + \sqrt{\frac{2}{\beta}}dB_k, \quad 1\leq k \leq n,
\end{equation}
where $B_k$, $1 \leq k \leq n$ are independent Brownian motions, and $\beta=1$, $2$ or $4$ for the ensembles above.  That is, the eigenvalues behave like repulsive unit charges on the line perturbed by independent white noise. The law of large numbers for this ensemble is  as follows. As $n \to \infty$ the spectral measure of $n^{-1/2}M_t$ converges to Wigner's semicircle law: 
\be
\label{eq:wigner}
\mu_t (dx)= \frac{1}{2\pi t}\sqrt{4t-x^2}\, dx, \quad |x| < 2\sqrt{t}.
\ee
Moreover, the Cauchy transform of $\mu_t$,
\be
\label{eq:wigner2}
g(z,t) = \int \frac{1}{z-x}\mu_t(dx), \quad z \in \C\backslash[-2\sqrt{t},2\sqrt{t}],
\ee
solves Burgers equation with a simple pole as initial data~\cite[\S 4.3.2]{Guionnet}. That is,  
\be
\label{eq:wigner3}
\partial_t g + g \partial_z g =0, \quad g(z,0) =\frac{1}{z}.
\ee
More precisely, $g$ solves \qref{eq:wigner3} in the slit plane $\C\backslash[-2\sqrt{t},2\sqrt{t}]$, and has the form
\be
\label{eq:voic}
g(z,t) = \frac{1}{\sqrt{t}}g_*\left(\frac{z}{\sqrt{t}}\right), \quad g_*(z) = \frac{1}{2}\left(z - \sqrt{z^2-4}\right), \quad |z| \geq 2.
\ee
It now transpires that the self-similar solution \qref{eq:burgers-sol} can be transformed to \qref{eq:voic} by a simple change of variables. 
\be
\label{eq:wigner4}
\frac{\psi_*(q)}{q} = g_*(z), \quad z= 2+\frac{1}{q}, \quad\mathrm{or}\quad \frac{g(t^{\frac{1}{2}}z,t)}{t^{\frac{1}{2}}} = \frac{\psi(t^{-1}q,t)}{t^{-1}q}.
\ee
 
In his beautiful thesis~\cite{Kerov}, Kerov found a deeper interpretation of \qref{eq:wigner3} based on the representation theory of the symmetric group. He introduced a Markov process for the growth of Young diagrams ({\em Plancherel growth\/}), and derived \qref{eq:wigner3} in a mean-field limit. More general initial conditions may also be included in \qref{eq:wigner3} and he showed that the evolution of $g$ by Burgers equation is equivalent to a kinetic equation for $\mu_t$~\cite[Ch. 4.5]{Kerov}. Thus, the transformation \qref{eq:wigner4} links $\mu_t$ to $\Lambda_t$, and Plancherel growth with Smoluchowski's coagulation equation (see \qref{eq:LK}). We do not have a deeper (i.e. stochastic process) explanation of this relation yet. 

Airy functions and the \Painleve\/ transcendents arise in the scaling limit of fluctuations from Wigner's law at the edge of the spectrum. The fluctuations are given by the celebrated Tracy-Widom distributions involving a solution to \Painleve-II~\cite{TW1}. We now point out that the function $l=\Jbar'/\Jbar$ ($\Jbar$ as in \qref{eq:groen3}) solves the Riccati equation $dl/dq=-q+l^2$ and is therefore an Airy solution to \Painleve-II~\cite[Ch. 7]{Ablowitz}. This is used to verify that Groeneboom's solution satisfies the kinetic equation \qref{eq:kinetic} in \S\ref{sec:sss}.

\subsubsection{Shell models of turbulence and their continuum limits}
It is also of interest to consider \qref{eq:sc1} on the half-line $x>0$ with random forcing at $x=0$. This problem arises as a continuum limit of shell models of turbulence~\cite{Mattingly-McK,Mattingly}. Shell models are lattice equations of the form $\dot{c_k}=J_{k-1} -J_{k}$, $k =1,2,\ldots$. Here $c_k \geq 0$ models the energy in the $k$-th Fourier mode (shell), and $J_k$ the flux from shell $k$ to $k+1$. $J_k$ is expressed in terms of $c_k$ and its nearest neighbors by a suitable constitutive relation. The main question is to understand how randomness spreads through the system under the assumption that $J_0$ is a prescribed random forcing (a stationary Feller process, for example). In particular, it is of interest to understand whether the system has a unique invariant measure. In the continuum limit, these questions may be treated by our approach. We note that the derivation of \qref{eq:kinetic2} does not depend on assumptions of stationarity and is independent of boundary conditions. Therefore, \qref{eq:kinetic2} holds in the domain $x,t>0$ and must be augmented with a boundary condition on the line $x=0$. In particular, as $t\to \infty$, invariant measures are solutions to 
\be
\label{eq:kinetic3}
-\partial_x \genb =[\gena, \genb],
\ee
with a boundary condition at $x=0$ matching $\genb$ and the generator of the forcing. 

In contrast to a time-correlated boundary forcing, it is also possible to consider a white-in-time forcing in the bulk (a Feller process in $x$ and $\delta$-correlated in $t$) which is independent of the initial data. For Burgers' equation, this is a particular case of {\em forced Burgers turbulence\/}. The generator $\genf$ of the forcing simply appears as an additional term in the Lax equation
\be
\label{eq:kinetic4}
\partial_t \gena -\partial_x \genb =[\gena, \genb] + \genf.
\ee
An interesting exact solution to \qref{eq:kinetic4} for  Burgers equation forced with a two-sided Brownian motion in space has recently been obtained in \cite{CD2}.

\subsubsection{Applications to statistics}

In statistics, \qref{eq:hl-burgers} with $\pot_0$ a two-sided Brownian motion first arose in the following estimation problem~\cite{Chernoff}. Suppose $X_1, \ldots, X_n$ are independent, identically distributed (iid) samples from a distribution with a smooth unimodal density $\rho$ with mode $m$ and finite variance. Let us consider a naive `binning' strategy to estimate the mode $m$. We fix a bin width $w$ and count the number of samples $N_n(s) = \#\{X_k \in (s-w,s+w)\}$ in the bin centered at $s$. We estimate the mode by $m_n = \argmax_s N_n(s)$. Chernoff observed that the fluctuations $m_n-m$ are $O(n^{-1/3})$. Precisely, for suitable $c(\rho,w)>0$, the rescaled random variables $cn^{1/3}(m_n-m)$ converge in law to {\em Chernoff's distribution}
\be
\label{eq:chernoff}
Z=\argmax_s\left\{ \pot_0(s) - s^2\right\}.
\ee
The quadratic term $-s^2$ arises from the Taylor expansion of  $\rho$ around $m$ -- the
first order term vanishes since $m$ is the maximum of $\rho$. By the symmetry of Brownian motion, it is clear that $Z$ has the same law as $a(0,1/2)$. 

This is not an isolated result: `cube-root' fluctuations appear naturally in a wide class of estimation problems~\cite{Tukey}. Kim and Pollard proved functional limit theorems for several such estimators, with the law of the limit characterized by 
$\argmax_{s \in \R^d} \left\{ \pot_0(s) - |s|^2\right\}$ where $\pot_0(s)$ is a continuous Gaussian process in $\R^d$ pinned at the origin~\cite{Kim-Pollard}. These correspond to solutions to Burgers equation in $\R^d$ with random initial data $U_0$, but this connection has not been explored in the literature.

\subsection{Outlook}
Let us conclude by pointing out some significant shortcomings in our work. Much of this article relies on formal calculations. But these calculations are often interesting, and it seems more fruitful to present them in a transparent and suggestive manner, rather than as rigorous statements burdened by technicalities. The most significant gap is that we do not {\em prove} \qref{eq:kinetic}. This is because our closure theorem is not strong enough. We only establish that the entropy condition preserves the Markov property. In order to rigorously establish \qref{eq:kinetic} it is necessary to prove that the entropy condition preserves Feller processes. This is an assertion of regularity, whereas we only establish measurability. This is closely tied to establishing a 
satisfactory well-posedness theory for \qref{eq:kinetic}, and will be addressed in forthcoming work. A suitable well-posedness theorem would also yield a probabilistic proof of the existence of a two-parameter family of self-similar solutions  to \qref{eq:kinetic}. These solutions are generated by considering the flux functions $f(u)= |u|^p/p$, $1<p<\infty$ and an initial potential that is an $\alpha$-stable spectrally negative \Levy\/ process. It seems challenging to prove this analytically starting with \qref{eq:kinetic}.

Also, while our work yields a deeper understanding of Groeneboom's solution, it does not, as yet, constitute an independent proof of his results. We have only been able to verify that his solution satisfies \qref{eq:kinetic}. We have been unable to derive it using \qref{eq:kinetic} alone. We hope to address this in future work using techniques from integrable systems.

The rest of this article is organized as follows. We first derive  \qref{eq:evolb} and \qref{eq:kin26} from the standpoint of kinetic theory in the next section.  This is followed by the rigorous closure theorems.  We then  
derive the Lax equations by BV calculus and by Hopf's method. Finally, we consider Groeneboom's self-similar solution for Burgers equation in Section~\ref{sec:sss}.
\section{Kinetic equations}
\label{sec:kinetic}
\subsection{Introduction}
In this section, we assume the velocity $u$ is a stationary Feller process in $x$, and derive the kinetic equations of Section~\ref{sec:kineticintro}. We use the evolution of a single shock and rarefaction wave to derive a Boltzmann-like equation for the evolution of the density of shocks. We conclude this section by showing that the Lax equation \qref{eq:kinetic} is equivalent to the kinetic equations \qref{eq:kin26}--\qref{eq:kin24}. The main observation is that the space of operators of the form \qref{eq:gena1} is formally a Lie algebra.

\subsection{Conservation of total number density}
Let $p(y,t)$ denote the stationary $1$-point density, i.e. $p(y,t)dy = P(u(x,t)\in(y,y+dy))$, and $\den(y,z,t)$ denote the total number density, i.e. the expected number of jumps per unit length from states $y$ to $z$. Then 
\be
\label{eq:kin1}
\den(y,z,t) = p(y,t) \jump(y,z,t).
\ee
The total number density changes because of a flux of shocks and shock collisions, and we have the general conservation law of Boltzmann-type
\be
\label{eq:kin2}
\partial_t \den + \partial_y\left( \den \velu\right) + \partial_z\left( \den \velv\right) = C(\den,\den).
\ee
Here $\velu$ and $\velv$ denote `velocities' in the $(y,z)$ `phase space', and $C$ is a binary collision kernel. This is the general structure of the equation. We now derive the evolution equation for $\drift$, the velocities $\velu$, $\velv$, and the collision kernel $C$, based on elementary solutions to the scalar conservation law \qref{eq:sc1}.

\subsection{Decay of the drift} First consider how affine data evolves under \qref{eq:sc1}. Let $u(x,t)$ solve \qref{eq:sc1} with $u_0(x) =\alpha_0 + \beta_0 x$. For $x,t \approx 0$, we have to leading order $u(x,t) \approx \alpha(t) +\beta(t) x$, so that
\be
\label{eq:kin3} \partial_t u \approx \dot{\alpha} + \dot{\beta}x, \quad \partial_x f(u) \approx f'(\alpha) \beta + f''(\alpha)\beta^2 x.
\ee
We now balance terms  in \qref{eq:sc1} to obtain
\be
\label{eq:kin4} 
\dot\alpha = -f'(\alpha)\beta, \quad \dot\beta = -f''(\alpha) \beta^2. 
\ee
The second equation expresses the decay of rarefaction waves when $\beta_0>0$. The connection between this elementary solution and the generator of the process is the following. The drift coefficient $\drift(y,t)$ corresponds locally to an affine profile as above with $\alpha=y$ and $\beta=\drift$. Thus, the second equation above is simply \qref{eq:evolb}. 

\subsection{Decay of shocks and $\velu$,$\velv$}
The `velocities' $\velu$ and $\velv$ in $(y,z)$ space arise  because of the decay of shocks. In order to derive these velocities, we fix $u_->u_+$ and consider piecewise affine initial data 
\be 
\label{eq:kin4a}
u_0(x) = \left\{ \begin{array}{l} u_- + \drift_- x, \quad x <0, \\ u_+ + \drift_+x, \quad  x >  0. \end{array} \right.
\ee
Let $s(t)$ denote the path of the shock. Then by \qref{eq:hl5}, for small $t$ 
\be
\label{eq:kin5}
s(t) = [f]_{u_-,u_+}(t + o(t)).
\ee
$s$ is also given by the kinematic condition
\be
\label{eq:kin6}
s(t) = a_\pm(t) + f'(u_\pm)t  + o(t).
\ee
where $a_\pm$ denotes the left and right inverse Lagrangian points at time $t$. Thus,
\be
\label{eq:kin7}
a_\pm(t) = \left([f]_{u_-,u_+}-f'(u_\pm) \right)t +o(t).
\ee
Since $u_->u_+$ and $f$ is convex, we see that $[f]_{u_-,u_+}-f'(u_-)<0$. Similarly,  $[f]_{u_-,u_+}-f'(u_+)>0$. As a consequence, a shock initially connecting states $u_\pm$ decays to a shock connecting states 
\be
\label{eq:kin8}
u_\pm + \left( [f]_{u_-,u_+}-f'(u_\pm)\right)\drift_\pm t  + o(t).
\ee
This decay gives rise to a flux of $\den$. To first approximation, the flux is linear in $\den$, and of the form \qref{eq:kin2} with the drift velocities given by \qref{eq:kin9}. 

\subsection{The collision kernel $C(\den,\den)$} Binary collisions of shocks occur at a rate determined by the Markov property (in $x$) and the relative velocity of shocks given by \qref{eq:hl5}. In order to simplify notation, we suppress the $t$-dependence for the process $u$ and write $u_x$ for $u(x,t)$. We also denote a shock connecting states $u_-$ and $u_+$ by $\{u_-,u_+\}$. Shock clustering involves the following events:
\ba
\label{eq:kin10}
&& \mathrm{growth:} \quad \{y,w\} + \{w,z\} = \{y,z\}, \quad z < w < y, \\
\label{eq:kin11}
&&
\mathrm{decay}: \quad \{y,z\} + \{z,w\} = \{y,w\}, \quad  -\infty < w < z, \\
\label{eq:kin12}
&&
\mathrm{decay}: \quad \{w,y\} + \{y,z\} = \{w,z\}, \quad  y < w < \infty.
\ea
The computation of rates for these events is similar. To be concrete, let us first consider \qref{eq:kin10}. Fix $z<w< y$ and consider small $\Delta x_1 >0$, $\Delta x_2 >0$. Then formally, by the Markov property
\be
\label{eq:kin12a}
dP(u_0=y,u_{\Delta x_1}=z) \approx p(y)\jump(y,z) \, dy \, dz \, \Delta x_1,
\ee
and similarly,
\ba
\label{eq:kin13}
\lefteqn{ dP\left(u_0=y, u_{\Delta x_1}=w, u_{\Delta x_1+\Delta x_2}= z\right)} \\
&&
\nonumber
\approx p(y) \jump(y,w) \jump(w,z) \,dy\, dz\, dw\, \Delta x_1\, \Delta x_2. 
\ea
The relative velocity of these shocks is $[f]_{y,w}-[f]_{w,z}$ to leading order. We thus set  $\Delta x_2= \left([f]_{y,w}-[f]_{w,z}\right)\, \Delta t$ to compute the number of collisions in time $\Delta t$. We now sum over $w$ in the range $z<w<y$ to obtain the growth term in $C(\den,\den)$:
\be
\label{eq:kin14}
C_1 := \int_{z}^y p(y)\jump(y,w) \jump(w,z) \left( [f]_{y,w}-[f]_{w,z} \right) \, dw.
\ee
The computation for the events \qref{eq:kin11} is similar. We now find the decay terms
\be
\label{eq:kin15}
C_2:= -\int_{-\infty}^z p(y)\jump(y,z) \jump(z,w) \left( [f]_{y,z}-[f]_{z,w} \right) \, dw,
\ee
and 
\be
\label{eq:kin16}
C_3 := -\int_{y}^\infty p(w)\jump(w,y) \jump(y,z) \left( [f]_{w,y}-[f]_{y,z} \right) \, dw.
\ee

\subsection{Kinetic equations for $\jump$}
We have now defined all the terms in \qref{eq:kin2}. In order to obtain an equation in terms of $\jump$ alone we use Kolmogorov's forward equations to eliminate the $1$-point distribution $p(y,t)$. The first equation in \qref{eq:forkolm} is now $0=\genadag p$ since the process is stationary in $x$. The second equation in \qref{eq:forkolm} implies
\be
\label{eq:kin17}
p(y)\partial_t \jump(y,z) = \partial_t \den(y,z) - \jump(y,z) \genbdag p(y),
\ee
where 
\be
\label{eq:kin18}
\genbdag p(y) = \partial_y \left(f'(y)b(y)p(y)\right) - \int_\R \left( p(w) \jump(w,y)-  p(y) \jump(y,w)\right)[f]_{y,w}\, dw. 
\ee
(It is convenient to denote the domain of integration by $\R$, noting that it is actually a half-line because $n(y,z)=0$ for $y<z$.)

To isolate the main cancellations on the right hand side of \qref{eq:kin17}, we note that  the integral term in $C_3 - \jump(y,z)\genbdag p(y)$ is
\be
\label{eq:kin19}
\jump(y,z) \left( [f]_{y,z} \int_\R p(w) \jump(w,y) \,dw - p(y) \int_\R \jump(y,w) [f]_{y,w} \, dw \right).
\ee
The first integral above can be simplified further. Since  $\genadag p=0$, we also have 
\be
\label{eq:kin20}
\int_\R p(w)\jump(w,y)\,dw = p(y)\int_\R \jump(y,w)\, dw + \partial_y\left(b(y) p(y)\right).
\ee
Therefore, we may rewrite the expression in \qref{eq:kin19} as 
\be
\label{eq:kin21}
p(y)\jump(y,z)\int_{-\infty}^y\left([f]_{y,z}-[f]_{y,w}\right)\jump(y,w)\, dw
+ \jump(y,z)[f]_{y,z}\partial_y\left(b(y) p(y)\right).
\ee
We now collect all terms on the right hand side of \qref{eq:kin17} using \qref{eq:kin2}, \qref{eq:kin9}, \qref{eq:kin14}, \qref{eq:kin15}, and \qref{eq:kin21}. We then have
\ba
\lefteqn{p(y)\partial_t \jump(y,z) = p(y)Q(\jump,\jump)(y,z)}\\
&&
\label{eq:kin22}
+ \jump(y,z)\left( [f]_{y,z} \partial_y\left(b(y) p(y)\right) -\partial_y \left( f'(y) b(y) p(y)\right) \right) \\
&& 
\label{eq:kin23}
- \partial_y \left(\den(y,z) \velu(y,z)\right) -\partial_z \left( \den(y,z) \velv(y,z)\right),
\ea
where $Q(\jump,\jump)$ denotes the collision kernel in \qref{eq:kin24}.
Finally, we use  \qref{eq:kin1} and \qref{eq:kin9} to obtain the sum of  \qref{eq:kin22} and \qref{eq:kin23}:
\be
\label{eq:kin25} 
= p(y) \left( \jump\left( [f]_{y,z}-f'(y)\right)\partial_y b - bf''(y)\right) -\partial_y\left(\jump \velu(y,z)\right)-\partial_z \left( \jump\velv(y,z)\right) .
\ee
Under the assumption that the Markov process $u$ has a strictly positive stationary density, we may cancel $p(y)$ on both sides of \qref{eq:kin21}. We then have the kinetic equations \qref{eq:kin26}.

\subsection{Equivalence of the Lax equations and kinetic equations}
The equivalence of the Lax equations \qref{eq:kinetic} and the kinetic equations \qref{eq:kin26}--\qref{eq:kin24} follows from the algebraic structure of operators of the form \qref{eq:gena1} and \qref{eq:genb}. The space of 
operators of the form
\be
\label{eq:alg1}
\gena \testfn(y) = \drift  (y)\testfn(y) + \int_\R \jump (y,z) \left( \testfn(z)-\testfn(y)\right) \, dz,
\ee
with $\drift$, $\jump$ smooth, $\jump$ satisfying appropriate integrability conditions, and the bracket $[\cdot,\cdot]$ is formally a Lie algebra. That is, if $\gena_i$, $i=1,2,3$ are operators as above, then $[\gena_1,\gena_2]$ is an operator of the same form, and the following Jacobi identity holds:
\be
\label{eq:jacobi}
[[\gena_1,\gena_2],\gena_3] + [[\gena_2,\gena_3],\gena_1]+[[\gena_3,\gena_1],\gena_2]=0.
\ee
We do not assume that $b$ and $\jump$ are positive or that their support is a half-line. 
Both assertions rely on tedious, but direct calculations. We will omit the proof of \qref{eq:jacobi}, and simply summarize the calculation for $[\gena_1,\gena_2]$. We have 
\be
\label{eq:alg2}
[\gena_1,\gena_2] \testfn(y) = \left(\drift_1 \drift_2' - \drift_1'\drift_2\right) \testfn(y)
+ \int_{\R} \left( \driftab + \collab \right)\left(\testfn(z)-\testfn(y)\right)\, dz, 
\ee
with
\be
\label{eq:alg3}
\driftab(y,z) = \drift_1(y) \partial_y \jump_2 - \drift_2(y) \partial_y \jump_1 + \partial_z \left(\drift_1(z) \jump_2 -\drift_2(z) \jump_1\right), 
\ee
and
\ba
\label{eq:alg4}
\collab(y,z) &&= \int_{\R} \left( \jump_1(y,w)\jump_2(w,z) - \jump_2(y,w)\jump_1(w,z) \right)\, dw\\
\nonumber
&& + \int_{\R}  \jump_2(y,z) \jump_1(z,w)- \jump_1(y,z) \jump_2(z,w) \, dw \\
\nonumber
&& + \int_{\R} \jump_1(y,z)\jump_2(y,w) -\jump_2(y,z) \jump_1(y,w) \, dw.
\ea
We apply \qref{eq:alg2} to \qref{eq:kinetic} as follows. Let $\gena_1=\gena$ and $\gena_2=\genb$, with $\gena$ and $\genb$ as in \qref{eq:gena1} and \qref{eq:genb}. We then find immediately from \qref{eq:alg3} that the drift coefficient of $[\gena,\genb]$ is 
\be
\label{eq:alg5}
-\drift(y) \left(f'(y)\drift(y)\right)' + f'(y)\drift(y)\drift'(y) = -f''(y)\drift^2(y)
\ee
as in \qref{eq:evolb}. Similarly, we use  \qref{eq:alg3} to obtain the second, third and last term in \qref{eq:kin26}. Finally, we substitute the jump measures in \qref{eq:alg4} to obtain the collision kernel $Q(\jump,\jump)$.

\section{Closure theorems}
\label{sec:closure}
\subsection{Introduction}
In this section we show that the entropy solution to \qref{eq:sc1} preserves the class of spectrally negative Markov processes. The Markov property of $u$ was implicitly used by Burgers, and made explicit in~\cite{AE}. Our work is based on previous results by Bertoin~\cite{B_burgers}  and Winkel~\cite{Winkel}. We extend these results to general convex fluxes $f$ and a large class of noise initial data.

\subsection{Splitting times}
The main technical tool we need is the decomposition of Markov processes at random times so that the past and future are conditionally independent. This certainly holds at a stopping time. However, it also holds for a broader class of {\em splitting times\/}.  We use the following theorem on the preservation of the Markov property at a last-passage time.
\begin{theorem}
  [Getoor~\cite{Getoor}]\label{thm:Getoor} Consider a c\`adl\`ag strong Markov process $X_s$.
  Let $M \subset \mathbb{R}$ be a fixed set and let $L = \sup \{ s \in \mathbb{R} : X_s\in M \}$
  be the end of $M$. Then the post-$L$ process $\{ X_s \}_{s \geq L}$ is independent of
  $\{ X_s \}_{s < L}$ given $X_L$. 
\end{theorem}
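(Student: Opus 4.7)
The plan is to approximate the last-exit time $L$, which fails to be a stopping time, by a countable family of first-hitting stopping times to which the strong Markov property applies. For each rational $q$, set $D_q = \inf\{s > q : X_s \in M\}$; since $X$ is \cadlag\/ and strong Markov we may take the filtration to be right-continuous, and each $D_q$ is then a stopping time. One has $\{L > q\} \supset \{D_q < \infty\}$ with equality modulo null sets, so $L$ is determined by the countable family $\{D_q\}_{q\in\mathbb{Q}}$.

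By a monotone class / Dynkin argument it suffices to prove that for bounded Borel $F$, $G$, rationals $q_1 < \cdots < q_k$, and times $u_1,\ldots,u_n > 0$,
\be
\label{eq:getoor-goal}
E\!\left[F(X_{q_1},\ldots,X_{q_k})\, G(X_{L+u_1},\ldots,X_{L+u_n});\, q_k < L < \infty\right] = E\!\left[F(X_{q_1},\ldots,X_{q_k})\, \Phi(X_L);\, q_k < L < \infty\right],
\ee
for a Borel kernel $\Phi$ depending only on $G$. The natural candidate is
\be
\label{eq:getoor-Phi}
\Phi(y) = E^*_y\!\left[G(X_{u_1},\ldots,X_{u_n})\right],
\ee
where $P^*_y$ is the Doob $h$-transform of $P_y$ by the excessive function $\varphi(y) = P_y(X_s \notin M \text{ for all } s > 0)$; heuristically $P^*_y$ is the law of $X$ started at $y$ and conditioned never to enter $M$ afterwards, which is exactly what the post-$L$ process must look like.

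I would establish \qref{eq:getoor-goal} by decomposing the event $\{q_k < L < \infty\}$ along the rational debuts: one writes it (up to mesh refinement) as a disjoint union of events of the form $\{q < L < q',\, D_q \text{ is the last visit}\}$ for $q_k < q < q'$ rational, on each of which $D_q = L$. Strong Markov at $D_q$ splits the history before $D_q$ from the process after it, and the restriction to the event that the post-$D_q$ trajectory avoids $M$ forever weights the future law $P_{X_{D_q}}$ by $\varphi(X_{D_q})$; normalizing this weight converts $P_{X_{D_q}}$ into the $h$-transformed law $P^*_{X_{D_q}} = P^*_{X_L}$. Summing over $q$ and refining the mesh yields the factorization, and hence the desired conditional independence.

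The main obstacle is the careful passage to the limit, together with the measurability questions underlying the $h$-transform: one must verify that $\varphi$ is Borel (indeed excessive), that $P^*_y$ is consistently defined on $\{\varphi > 0\}$, and that on $\{\varphi(X_L) = 0\}$ -- corresponding to $L$ being accumulated from the right by visits to $M$ -- the factorization still holds in a degenerate sense. These regularity facts are standard consequences of Choquet capacitability and section theorems for \cadlag\/ strong Markov processes, and once they are in hand \qref{eq:getoor-goal} delivers the independence of the pre-$L$ and post-$L$ processes given $X_L$.
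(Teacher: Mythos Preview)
The paper does not prove this theorem: it is quoted as a result of Getoor and cited without proof, then used as a black box in the proofs of Theorems~\ref{thm:Closure} and~\ref{thm:ClosureNoise}. So there is no ``paper's own proof'' to compare your attempt against.

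That said, your sketch is along the right lines. The classical argument for last-exit decompositions (Getoor, Meyer--Smythe--Walsh, Pittenger--Shih) does proceed by approximating the co-optional time $L$ via a countable family of first-entrance stopping times, applying the strong Markov property at each, and identifying the post-$L$ law as the $h$-transform by the excessive function $\varphi(y)=P_y(\text{never hit }M\text{ again})$. Your identification of $\Phi$ in \qref{eq:getoor-Phi} is exactly this. Two cautions: (i) the decomposition of $\{q_k<L<\infty\}$ into pieces on which some fixed $D_q$ equals $L$ is delicate, because in general $L$ need not coincide with any $D_q$ (the last visit may not be a first entrance after a rational); the usual route is instead to work with the last-exit kernel and the co-optional projection, or to use reversal/duality when available; (ii) the conditioning is on $X_L$ (or more precisely $X_{L-}$ in some formulations), and the event $\{\varphi(X_L)=0\}$ requires care, as you note. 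None of this affects the present paper, which only needs the statement.
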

Furthermore, the transition semigroups of the pre- and post-$L$ processes can be explicitly determined from that of $X$. It is an important and subtle fact that these transition semigroups are different from that of the original process. In particular, the semigroup of the post-$L$ process is the same as that of $X$ conditioned to never hit $M$ again.

\subsection{Closure for spectrally negative initial velocity} 
\label{subsec:Closure}
We work within the canonical framework. Let $u_0 = ( \Omega, \mathcal{F},
\mathcal{F}_s, u_0 ( s ), \mu_0 )$ be a spectrally negative Markov process on
the space $\Omega$ of c\`adl\`ag paths on $( - \infty, \infty )$ endowed with
the Skorohod topology. Here, $\mu_0$ is the law of the coordinate process $u_0
( \omega ; s ) = \omega ( s )$ on $\Omega$ and $\mathcal{F}_s = \sigma (
u_0 ( r ), r \leq s )$ is the natural filtration of $u_0$ extended to be
right-continuous and complete. Note that in our work $x$ plays the role of ``time'' as traditionally used in the theory of stochastic processes. $t$ acts as a parameter in the following discussion, and will be dropped from the notation when convenient. 

We rewrite the Hopf-Lax functional and inverse Lagrangian function as
\begin{equation}
  \label{eq:HopfLaxFunctional} I ( s ; x, t ) = \int_0^s \left( u_0 ( r ) - (
  f' )^{- 1} \left( \frac{x - r}{t} \right) \right) d r
\end{equation}
\begin{equation}
  \label{eq:InverseLagrangian} a ( x, t ) = \arg^+ \min_{s \in
  \mathbb{R}} I ( s ; x, t ) .
\end{equation}

To ensure that (\ref{eq:InverseLagrangian}) is well-defined and finite, we assume  (\ref{eq:hlgrowth}) holds a.s. This is certainly true if $u_0$ is a stationary random process and $f^*$ grows fast enough at infinity. In what follows we also make use of the fact that for fixed $t$, $a(x,t)$ is an increasing function in $x$. Since $u ( x, t )$ and $a ( x, t )$ are related by \qref{eq:hl4}, in order to obtain the closure property we only need to show that $a ( x, t )$ is a Markov process. We prove the following:

\begin{theorem}
  \label{thm:Closure}Let $u_0$ be a spectrally negative strong Markov process such that 
 \qref{eq:hlgrowth} holds a.s. Under the law $\mu_0$
  and for any fixed $t > 0$, the inverse Lagrangian process $a ( x, t )$ is
  a Markov process. Thus, $u ( x, t )$ is a spectrally negative Markov process.
\end{theorem}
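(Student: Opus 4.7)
The plan is to prove the Markov property of $a(\cdot, t)$ by a splitting argument at the rightmost global minimizer $a_0 := a(x_0, t)$ of the Hopf-Lax functional $I(\cdot; x_0, t)$, and to invoke Getoor's theorem (Theorem~\ref{thm:Getoor}) to transfer conditional independence from $u_0$ to $a(\cdot, t)$. Spectral negativity of $u(\cdot, t)$ will then follow from the Rankine-Hugoniot condition \qref{eq:hl5}, which produces only downward jumps in the entropy solution.

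The first step is a deterministic spatial decoupling. For $x_1 > x_0$, define
$$\phi(s) := I(s;x_1,t) - I(s;x_0,t) = \int_0^s \left[(f')^{-1}\!\left(\tfrac{x_0-r}{t}\right) - (f')^{-1}\!\left(\tfrac{x_1-r}{t}\right)\right] dr.$$
Since $(f')^{-1}$ is strictly increasing, $\phi$ is strictly decreasing and deterministic. For $s < a_0$ one has $I(s;x_1,t) - I(a_0;x_1,t) = [I(s;x_0,t) - I(a_0;x_0,t)] + [\phi(s) - \phi(a_0)] > 0$, so the rightmost minimizer of $I(\cdot;x_1,t)$ lies in $[a_0, \infty)$, and
$$a(x_1,t) = a_0 + \argmin_{s' \geq 0} \int_0^{s'}\left[u_0(a_0+r) - (f')^{-1}\!\left(\frac{x_1-a_0-r}{t}\right)\right] dr.$$
Thus $\{a(x,t) : x \geq x_0\}$ is measurable with respect to $a_0$ together with the post-$a_0$ path of $u_0$; a symmetric argument shows that $\{a(x,t) : x \leq x_0\}$ is measurable with respect to $a_0$ and the pre-$a_0$ path.

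The second (and main) step identifies $a_0$ with a last-passage time in the sense required by Theorem~\ref{thm:Getoor}. Spectral negativity of $u_0$ forces $\partial_s I(s;x_0,t) = u_0(s) - (f')^{-1}((x_0-s)/t)$ to have only downward jumps, and a direct optimality argument at the rightmost global minimizer then shows that this derivative is continuous and zero at $s = a_0$; in particular $u_0(a_0^-) = u_0(a_0) = (f')^{-1}((x_0-a_0)/t) = u(x_0,t)$. Two important consequences follow. First, $u_0(a_0)$ is a deterministic function of $a_0$, so conditioning on the Getoor state $u_0(L)$ is equivalent to conditioning on $a_0$ alone, which is exactly what the Markov property of $a(\cdot,t)$ requires. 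Second, $a_0$ admits a realization as $\sup\{s : X_s \in M\}$ for a strong Markov process $X$ on an enlarged state space (the natural candidate being the time-space process $(u_0(s), s)$, possibly further augmented by the running value of $I(s;x_0,t)$) and a fixed deterministic set $M$ encoding the vanishing of $\partial_s I$ together with strict positivity of the forward integral. Theorem~\ref{thm:Getoor} then yields conditional independence of the pre- and post-$a_0$ paths of $u_0$ given $a_0$, and combined with the first step this gives the Markov property of $a(\cdot,t)$.

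The hard part I anticipate is this last-passage realization of $a_0$ for a general spectrally negative strong Markov $u_0$ and general convex $f$. The variational principle defining $a_0$ is inherently global in $s$, so casting it as $\sup\{s : X_s \in M\}$ for a deterministic set $M$ demands a careful state augmentation and verification that the strong Markov property survives. The growth hypothesis \qref{eq:hlgrowth} is needed to ensure that the global minimum is attained, and spectral negativity is essential both for the continuity-at-the-minimum argument and for controlling the one-sided excursion structure. This is the point at which the proof must extend the Bertoin-Winkel technique from spectrally negative L\'evy initial data to arbitrary spectrally negative Markov initial data.
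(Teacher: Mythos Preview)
Your overall architecture matches the paper's proof: Step~1 (deterministic spatial decoupling via the monotone shift $\phi$) and Step~2 (spectral negativity forces $u_0(a_0) = (f')^{-1}((x_0-a_0)/t)$, so $u_0(a_0)$ is a deterministic function of $a_0$) are essentially identical to the paper's first two steps.

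The gap is in your Step~3, and you correctly flag it as the hard part. Your proposed set $M$ is supposed to encode ``vanishing of $\partial_s I$ together with strict positivity of the forward integral,'' but the second of these is a \emph{future-looking} condition and therefore cannot be expressed as membership of the \emph{current state} in a fixed deterministic set; no amount of state-space augmentation by past-measurable functionals will capture it. Your alternative suggestion of augmenting by the running value of $I(s;x_0,t)$ is closer but still insufficient: knowing $I(s)$ does not tell you whether $s$ is a global minimizer.

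The paper's construction resolves this by augmenting not with $I$ but with the \emph{running-minimum deficit} $D(s) = I(s) - \min_{r \le s} I(r)$. The process $(u_0(s),\, d(s),\, D(s))$ with $d(s) = (f')^{-1}((x_0-s)/t)$ is shown to be strong Markov by a direct computation (the key point is that $D(\xi+s)$ depends on $\mathcal F_\xi$ only through $D(\xi)$ and the post-$\xi$ increments of $I$, which in turn depend only on $u_0(\xi)$). Then $a_0$ is exactly the last time this triple hits the fixed set $\{(y,y,0): y \in \R\}$: the condition $D=0$ says $I$ is at its running minimum (hence, at the last such time, at its global minimum), and the condition $u_0=d$ is your Step~2. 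This converts the global variational characterization of $a_0$ into a genuine last-passage time, to which Getoor's theorem applies directly.
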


\begin{proof}
  Fix $x \in \mathbb{R}$. Without loss of generality it suffices to take $t =
  1$. We also suppress the dependence on $t$ in the notation for clarity.
The proof consists of three steps, the first two of which are entirely
  deterministic. 

 {\emph{1. Dependence structure of $( a ( x ) )_{x \in \mathbb{R}}$}}: 
Since $a ( x )$ is increasing, for $h > 0$
  \begin{align}
    a ( x + h ) & = \arg^+ \min_{s \in \mathbb{R}} I ( s ; x + h )
    \nonumber\\
    & = \arg^+ \min_{s \geq a ( x )} \{ I ( a ( x ) ; x + h ) + ( I
    ( s ; x + h ) - I ( a ( x ) ; x + h ) ) \} \nonumber\\
    & = a ( x ) + \arg^+ \min_{s \geq 0} \left\{ \int_{a ( x )}^{a (
    x ) + s} \left( u_0 ( r ) - ( f' )^{- 1} ( x - r ) \right) d r \right\} . 
    \label{eq:DependenceStructure}
  \end{align}
  $( a ( x + h ) - a ( x ) )_{h > 0}$ therefore depends on $u_0$ only through
  $( u_0 ( s ) )_{s > a( x )}$. The same argument shows that $( a ( x - h ) -
  a ( x ) )_{h > 0}$ depends only on $( u_0 ( s ) )_{s < a ( x )}$.
    
{\em 2. Downward jumps determine $u_0 ( a ( x ) )$}:
If $u_0$ has only downward jumps then $\partial_s I ( a( x ) ; x ) = 0$. That is,
  \begin{equation}
    \label{eq:LagrangianCondition} u_0 ( a ) = ( f' )^{- 1} ( x - a( x ) ) .
  \end{equation}
  This may be seen easily by sketching a picture. Upward jumps in $u_0$ give rise to a 
  potential with a corner that is convex, and downward jumps give a potential with a corner that is concave. In the second case, the minimum of $I$ can never be achieved at the corner. Thus, it is always obtained at a point of continuity, implying \qref{eq:LagrangianCondition}.

 {3. \emph{$a( x )$ is a splitting time for $u_0$}}: It is not obvious that $a( x )$
  is a Markov time for $u_0$. It is not a
  stopping time since the event $\{ a ( x ) \leq s \}$ is not
  $\mathcal{F}_s$-measurable. Indeed, $a ( x )$ is the
  {\em last time\/} (\ref{eq:HopfLaxFunctional}) is minimized and depends on
  $(u_0 ( r ) )_{r > s}$ as well. $a ( x )$ is, however, a splitting time for a
  certain functional of $u_0$. To see this, first define
  \[ m ( s ; x ) = \min_{r \leq s} I ( r ; x ), \qquad D ( s ; x ) = ( I
     - m ) ( s ; x ) . \]
  For simplicity, denote $d(s;x) = (f')^{- 1} ( x - s )$.
  Dropping the explicit dependence on $x$ from the notation, consider the
  process $( u_0, d, D )$. Since $D ( s )$ is entirely dependent on $(
  u_0 ( r ) )_{r \leq s}$, $( u_0, d, D ) ( s )$ is $\mathcal{F}_s$-measurable.
  For any $\mathcal{F}$-stopping time $\xi$
  \[ m ( \xi + s ) = m ( \xi ) \wedge \left( \min_{\xi < r \leq \xi + s} I ( r
     ) \right) = I ( \xi ) + \left\{ - D ( \xi ) \wedge \min_{0 < r \leq s} (
     I ( \xi + r ) - I ( \xi ) ) \right\} \]
  and
  \[ D ( \xi + s ) = ( I ( \xi + s ) - I ( \xi ) ) - \left\{ - D ( \xi )
     \wedge \min_{0 < r \leq s} ( I ( \xi + r ) - I ( \xi ) ) \right\} . \]
  For $s > 0$ the increment $I ( \xi + s ) - I ( \xi )$ depends on
  $\mathcal{F}_{\xi}$ only through $u_0 ( \xi )$ by the strong Markov property
  of $u_0$. Therefore, $( u_0, d, D )$ is also a c\`adl\`ag, strong Markov process.
  
  We now show that $a ( x )$ is a splitting time for $( u_0, d, D )$. By
  definition (\ref{eq:InverseLagrangian}) and step 2, $a ( x )$ is the last time $s$ that
  $( u_0, d, D )(s)$ hits the fixed set $\{(y,y,0) : y\in\mathbb{R} \}$.
  Thus, we can use Theorem \ref{thm:Getoor} to split $u_0$ at $a ( x )$ into
  \[ ( u_0, d, D )_{s < a ( x )} \text{ and } ( u_0, d, D )_{s > a ( x )}. \]
 As a consequence, 
  \[ ( u_0 ( s ) )_{s < a ( x )} \text{ and } ( u_0 ( s ) )_{s > a ( x )} \]
  are conditionally independent given $d(a(x))$---that is, since $d$ is invertible,
  given $a ( x )$.

This shows that $a$ is a Markov process (note that the law of the increments of the process may vary with $x$). Therefore, $u ( x ) = ( f' )^{- 1} ( x - a ( x ) )$ is a
  spectrally negative Markov process. 
\end{proof}
\begin{remark}
Theorem~\ref{thm:Closure} reduces to Bertoin's closure theorem~\cite[Thm. 2]{B_burgers} if $f ( u )   = u^2 / 2$ (Burgers flux) and $u_0$ is a spectrally negative L\'evy process. To see
  this, use (\ref{eq:LagrangianCondition}) in (\ref{eq:DependenceStructure}):
  \begin{align}
    a ( x + h ) - a ( x ) & = \arg^+ \min_{s \geq 0} \left\{ \int_{a
    ( x )}^{a ( x ) + s} \left( u_0 ( r ) - u_0 ( a ( x ) ) + r - a ( x )
    \right) d r \right\} \nonumber\\
    & = \arg^+ \min_{s \geq 0} \left\{ \int_0^s \left( u_0 ( a ( x )
    + r ) - u_0 ( a ( x ) ) + r \right) d r \right\} . \nonumber
  \end{align}
  Since $u_0$ has independent increments, the integrand above is independent of $(
  u_0 ( s ) )_{s < a ( x )}$ and has the same law as $u_0 ( r ) - u_0 ( 0 ) +
  r$. So, the law of $a ( x + h ) - a ( x )$ is independent of $a ( x - h ) -
  a ( x ) )_{h \geq 0}$ and does not vary with $x$. Note that it is necessary
  to use $f(u)=u^2/2$ to obtain this result, and to show that the increments of
  $a ( x )$ are identical in law to those of the first hitting process $T ( x
  ) = \inf \{ s \geq 0 : u_0 ( s ) + s \geq x \}$.
\end{remark}

\subsection{Closure for noise initial data}
We now prove the analogue of Theorem~\ref{thm:Closure} for noise initial data. Recall that the Hopf-Lax functional
and inverse Lagrangian function satisfy
\begin{equation}
  \label{eq:HopfLaxFunctional2} I ( s ; x, t ) = \pot_0 ( s ) + f^* \left(\frac{x - s}{t} \right)
\end{equation}
\begin{equation}
  \label{eq:InverseLagrangian2} a ( x, t ) = \arg^+ \min_{s \in
  \mathbb{R}} I ( s ; x, t ) .
\end{equation}
We prescribe the law of the potential as follows. Let
\begin{equation}
  \pot_0 ( x ) = \left\{ \begin{array}{cc} - \tilde{Y}_{- x_-}, & x \leq 0  \\  Y_x, & x > 0  \end{array}, \right.
\end{equation}
where $Y, \tilde{Y}$ are independent copies of a spectrally negative
additive process starting at 0 and $x_-$ denotes the limit from the left. We denote by $M_0$ the law of $\pot_0$ on the space of c\`adl\`ag paths. If $X$ has stationary increments it is 
a L\'evy process, but it is not necessary to assume this. In addition, assume the growth condition (\ref{eq:hlgrowth}) holds almost surely.

\begin{theorem}
  \label{thm:ClosureNoise}Suppose $\pot_0 = ( \Omega, \mathcal{G}, \mathcal{G}_s,
  \pot_0 ( s ), M_0 )$ is a two-sided spectrally negative process with
  independent increments and satisfies (\ref{eq:hlgrowth})
  a.s. Then under $M_0$ and for all $t > 0$, $u ( x, t )$ is a spectrally
  negative Markov process.
\end{theorem}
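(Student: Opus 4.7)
The plan is to mimic the three-step proof of Theorem~\ref{thm:Closure}, adjusted to handle the fact that in the noise setting $\pot_0(a(x,t))$ is \emph{not} pinned down by a first-order condition analogous to~(\ref{eq:LagrangianCondition}). Fix $t=1$ and suppress the $t$-dependence. The dependence-structure computation goes through verbatim using $I(s;x)=\pot_0(s)+f^*(x-s)$: for $h>0$,
\[
a(x+h)-a(x)=\argmin_{s\geq 0}\big\{\pot_0(a(x)+s)-\pot_0(a(x))+f^*(x+h-a(x)-s)-f^*(x+h-a(x))\big\},
\]
and analogously for $h<0$. Thus the past and future $a$-increments are measurable in $a(x)$ together with the unilateral shifted processes $\pot_0(a(x)+\cdot)-\pot_0(a(x))$ and $\pot_0(a(x)-\cdot)-\pot_0(a(x))$, and do not involve the baseline $\pot_0(a(x))$ itself.

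For the splitting step I would work with the augmented process $\Xi_s=(\pot_0(s),D(s;x))$ where $D(s;x)=I(s;x)-\min_{r\leq s}I(r;x)$. Since $\pot_0$ has independent increments it is strong Markov, and because $D$ depends causally on $\pot_0$, the pair $\Xi$ is a c\`adl\`ag, time-inhomogeneous strong Markov process in $s$ (which can be rendered time-homogeneous by adjoining $s$ to the state). The hypothesis that $\pot_0$ has no upward jumps ensures, exactly as in step~2 of Theorem~\ref{thm:Closure}, that the minimum of $I(\cdot;x)$ is attained at a continuity point of $\pot_0$, so $a(x)=\sup\{s:D(s;x)=0\}$ is the last-passage time of $\Xi$ into the fixed set $\mathbb{R}\times\{0\}$. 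Theorem~\ref{thm:Getoor} applied to $\Xi$ then yields that the pre- and post-$a(x)$ restrictions of $\pot_0$ are conditionally independent given the pair $(\pot_0(a(x)),a(x))$.

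The new ingredient is the reduction of this conditioning from $(\pot_0(a(x)),a(x))$ down to $a(x)$ alone. The key observation is that $D(\cdot;x)$, the set $\{D=0\}$, and the law of the increments of $\pot_0$ are all invariant under the translation $\pot_0\mapsto\pot_0+c$. Consequently, the entire Getoor conditioning event is compatible with this shift, and the conditional law of the shifted increment process $s\mapsto\pot_0(a(x)+s)-\pot_0(a(x))$ given $(\pot_0(a(x))=y,a(x)=\sigma)$ depends on $y$ only trivially---that is, only on $\sigma$; the analogous statement holds on the pre-$a(x)$ side. The elementary lemma that $A\perp B\mid(C,W)$ together with $A\perp W\mid C$ and $B\perp W\mid C$ implies $A\perp B\mid C$ then upgrades the conditional independence of the previous step to conditional independence given $a(x)$. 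Hence $a(\cdot)$ is Markov, and so is $u(x)=(f')^{-1}(x-a(x))$; the downward-jump property is inherited via the monotonicity of $a$ and the Rankine--Hugoniot formula~(\ref{eq:hl5}).

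The principal obstacle is making the translation-invariance argument rigorous: $a(x)$ is a last-passage time rather than a stopping time, so one cannot simply cite the independent-increments property at this random instant. The key computation is to identify the conditional law of the shifted process under the Doob $h$-transform description of the post-$L$ process in Theorem~\ref{thm:Getoor} and verify that this law is manifestly shift-invariant in $\pot_0(a(x))$. An alternative route, suggested by the parallel with~\cite{B_burgers,Winkel}, uses a time-reversal argument specific to spectrally negative additive processes.
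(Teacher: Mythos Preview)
Your proposal is essentially the paper's own argument, in slightly different coordinates. The paper augments $I(s;\cdot)$ with its running minimum $M_s$ and applies Getoor's theorem to the pair $(X_s,M_s)$ hitting the diagonal, whereas you augment $\pot_0$ with $D=I-M$ and hit $\mathbb{R}\times\{0\}$; since $I$ and $\pot_0$ differ by a deterministic function of $s$, and $(X,M)\leftrightarrow(\pot_0,D)$ is an invertible change of variables, these are the same splitting. The ``new ingredient'' you isolate---reducing the conditioning from $(\pot_0(a(x)),a(x))$ (equivalently, from $(m,a)$) down to $a(x)$ alone via translation invariance coming from independent increments---is exactly the paper's step~3. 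The paper closes the gap you flag as the ``principal obstacle'' not by the abstract conditional-independence lemma you state, but by invoking the explicit description of the post-$L$ semigroup from Getoor (his Theorem~2.12; also \cite[Thm.~5.1]{MSW}): the post-$a$ process is the original process started at $(a,m)$ and conditioned to stay above $m$, which by translation invariance in state space has the same law as the process started at $(a,0)$ conditioned to stay above $0$, so the post-$a$ increment law is independent of $m$. This is precisely the Doob $h$-transform route you suggest at the end.
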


\begin{proof}
  Again, we fix $x \in \mathbb{R}$, take $t = 1$, and drop the $t$-dependence in our notation.
  
  1. {\emph{Dependence structure of $( a ( x ) )_{x \in \mathbb{R}}$}}: As
  before, for $h > 0$
  \begin{align}
    a ( x + h ) & = \arg^+ \min_{s \in \mathbb{R}} I ( s ; x + h )
    \nonumber\\
    & = a ( x ) + \arg^+ \min_{s \geq 0} \left\{ I ( a ( x ) + s ; x
    + h ) - I ( a ( x ) ; x + h ) \right\} \nonumber
  \end{align}
  Let $X_s$ denote the strong Markov process $I(s ; x + h)$ and let $a = a(x)$.
  The increment $a ( x + h ) - a ( x )$ therefore depends on $\pot_0$ only through
  the increment $X_{a+s} - X_a$.
    
  2. {\emph{Spectral negativity of $X$}}: Since $\pot_0$ is spectrally negative, so is
  the process $X_s$.  Explicitly,
  \[ X_{s_-} \wedge X_s = X_s . \]
    
  3. {\emph{Markov property of $X$ at $a$}}:
  The inverse Lagrangian $a$ is the last time $X_{s-} \wedge X_s $ hits its ultimate minimum $m=X_a$.
  Using spectral negativity, we need only consider $X_s$. Let $M_s = \min_{r\leq s} X_r$.
  Then $a ( x )$ is the last time $s$ that $( X_s, M_s )$ hits the fixed set $\{(y,y) : y\in\mathbb{R} \}$.
  Theorem \ref{thm:Getoor} then implies $(X_s)_{s \geq a}$ is independent of $(X_s)_{s < a }$ given $m$.

  We now make use of the particular form of the semigroup of the post-$a$ process, for which we refer
  to~\cite[Thm. 2.12]{Getoor} (see also Theorem 5.1 in~\cite{MSW} and the subsequent remark).
  Precisely, given $a$ and $m$, the law of $(X_s)_{s \geq a}$ is identical to that of an independent
  copy of $X_s$ started at $a$ with $X_a = m$ and conditioned to live above $m$. By the invariance of $U_0$ under
  translations in state space (due to the property of independent increments) this
  implies that $(X_s)_{s \geq a}$ is identical in law to a copy of $X_s$ started at $a$ with value $X_a = 0$ and
  conditioned to live above $0$. To summarize, although the post-$a$ process no longer has independent
  increments, the increment $X_{a+s}-X_a$ is still independent of $m$ given $a$.
  
  By steps 1 and 3, the increment $a ( x + h ) - a( x )$ is independent of $( a ( x - h ) - a ( x ) )_{h \geq 0}$
  given $a$ and $m$.  Since $a ( x + h ) - a ( x )$ only depends on $X$ through the increment $X_{a+s} - X_a$,
  given $a$ it is independent of $m$ as well. Therefore, $a ( x )$ is a Markov process and $u ( x )$ is a
  spectrally negative Markov process. Notice that if $\pot_0$ is not spectrally negative or does not have
  independent increments, then $a( x + h ) - a ( x )$ depends on $\pot_0(a_-)$ or $\pot_0 ( a )$ in
  addition to $a$, destroying the Markov property.
\end{proof}

\section{BV calculus and the Lax equations}
\label{sec:bv}
In this section, we derive the Lax equation \qref{eq:kinetic} using the conservation law~\qref{eq:sc1} and the \Volpert\/ chain rule for BV functions. This calculation is similar in spirit to \cite{EvE}.  We assume that for every $t>0$,  $u(x,t)$ is a spectrally negative, stationary Feller process.  The generator of $u$ is given by \qref{eq:gena1} and the operator $\genb$ is defined by \qref{eq:genb}. In addition, we define one and two-point operators as follows. We associate a linear functional $\PP$ to the stationary one-point distribution 
$p(y,t)$ via
\be
\label{onefn5}
\PP(\testfn) = \E\left(\testfn(u(x,t))\right) = \int_\R \testfn(y) p(y) \, dy. 
\ee
(Here and in what follows, it is convenient to suppress $t$ in the notation.)
Similarly, given $h >0$ we denote
the transition kernel for the process $u$ by $q_{h}$ and define the associated transition operator
\be
\label{onefn6}
 (\QQ_h \testfn) (y) = \int_\R q_h(y,z) \testfn(z) \, dz. 
\ee

\subsection{The \onefn\/}
We recall that the entropy solutions to \qref{eq:sc1} are in $\BVloc(\R_+\times \R)$.  \Volpert\/ showed that one may extend the chain rule to such functions in a natural manner. Let $u_\pm =u(x_\pm,t)$ denote the right and left limits of $u(x,t)$ and for any test function $\testfn$, consider the composition $\testfn(u(x,t))$ and set
\be
\label{eq:bv1}
[\testfn]=\int_0^1 \testfn\left(u_-+ \beta(u_+-u_-)\right)d\beta.
\ee
Then for every smooth test function with compact support, the entropy solution to \qref{eq:sc1} satisfies ~\cite[p.248]{Volpert}
\be
\label{eq:bv2}
\partial_t \testfn(u(x,t)) = -[\testfn][f] \partial_x u(x,t), \quad x \in \R, t>0.
\ee
The law of the \onefn\/ is determined by $\E\left(\testfn(u(x,t)\right)$ for arbitrary $\testfn$. We use equation \qref{eq:bv2} to obtain
\be
\label{onef1}
\partial_t \E \left( \testfn\left(u(x,t) \right) \right) = - \E \left(
  [\testfn][f] \partial_x u(x,t) \right).  
\ee
The left hand side of \qref{twof1} is $(\partial_t P) \testfn$. We must determine the right hand side. To this end, fix $x\in \R$ and for $h>0$ let  us denote $\umh=u(x-h,t)$. Our calculation relies on the following {\em unjustified\/} interchange of limits:
\be
\label{onef2}
\E \left([\testfn][f] \partial_x u(x,t) \right) = \lim_{h\dnto 0} \E
\left( \frac{1}{h} 
  \left(\testfn(\up)-\testfn(\umh)\right) \frac{\left(f(\up)
      -f(\umh)\right)}{\up-\umh} \right).
\ee
We assume \qref{onef2}, and show that
\be
\label{lim1}
\lim_{h\to 0} \E \left( \frac{1}{h}
  \left(\testfn(\up)-\testfn(\umh)\right) \frac{\left(f(\up)
      -f(\umh)\right)}{\up-\umh} \right) = -\PP \left(\genb \testfn\right),
\ee
As a consequence, 
\be
\label{onef3}
(\partial_t \PP) \testfn = \PP \genb \testfn,
\ee
which is the formal forward equation \qref{eq:forkolm}.

We now establish \qref{lim1} for a monomial $f(u)=u^m$. Since $\genb$ depends linearly on $f$, \qref{lim1} then also holds for polynomials and by approximation for $C^1$ fluxes with polynomial growth.  For $f(u)=u^m$ we expand the jump term  
\be
\label{fex}
\frac{f(\up)-f(\umh)}{\up -\umh} = \sum_{j=0}^{m-1} \umh^{m-1-j} \up^{j}. 
\ee
Therefore, the limit in \qref{lim1} is a sum of terms of the form
\ba
\nn
\lefteqn{ \left( \testfn(\up)-\testfn(\umh)\right) \umh^{m-1-j} \up^{j} = }
\\
\nn
&&
\umh^{m-1-j} \left(  \testfn(\up) \up^j - \testfn(\umh) \umh^j  -
\testfn(\umh) \left(\up^j -\umh^j \right) \right). 
\ea
The limit of each of these terms is
\ba
\nn
\lefteqn{ \lim_{h\to 0}\frac{1}{h} \E\left(  \left(
      \testfn(\up)-\testfn(\umh)\right) \umh^{m-1-j} 
  \up^j\right) = } 
\\
\nn
&&
\lim_{h \to 0} \frac{1}{h} \PP\left( y^{m-1-j}\left( \QQ_h(y^j
      \testfn) - y^j\testfn  - \testfn \left( \QQ_h(y^j)
      -y^j \right)\right) \right)\\
\label{gen1}
&&
= \PP \left( y^{m-1-j} \left( \gena(y^j \testfn) - \testfn \gena \left(y^j
    \right) \right) \right). 
\ea 
The term  $\gena(y^j \testfn) - \testfn \gena \left(y^j \right)$ may be computed
using \qref{eq:gena1}. First the drift term is
\be
\label{drift}
 y^{m-1-j} \left( \drift (y^j \testfn)' - \testfn \drift (y^j)'\right)= \drift y^{m-1}
\testfn'. 
\ee
Similarly, the jump term simplifies to
\be
\label{jump1}
y^{m-1-j} \int_\R n(y,z)  z^j \left( \testfn(z) - \testfn(y) \right) \, dz. 
\ee
We now sum over all terms  in the expansion \qref{fex} to obtain 
\ba
\nn
\lefteqn{\lim_{h\to 0} \E \left( \frac{1}{h}
  \left(\testfn(\up)-\testfn(\umh)\right) \frac{\left(f(\up)
      -f(\umh)\right)}{\up-\umh} \right)}\\
\nn
&& =  \sum_{j=0}^{m-1} \PP \left(y^{m-1-j} \left( \gena(y^j\testfn) -  \testfn \gena
  y^j \right) \right)\\
\nn
&&= \PP\left( \sum_{j=0}^{m-1} \drift y^{m-1}\testfn' + \int_\R n(y,z) y^{m-1-j}
  z^j \left( \testfn(z) - \testfn(y) \right) \, dz \right) \\ 
&& 
\nn
= \PP\left(  \drift my^{m-1}\testfn' + \int_\R n(y,z) \frac{z^m-y^m}{z-y} \left(
  \testfn(z) - \testfn(y) \right) \, dz\right) = -\PP\genb \testfn. 
\ea

\subsection{The \twofn\/}
Fix $x \in \R$ and $\alpha>0$. Let $\testfn$ and $\psi$ be two test functions. The law of the \twofn\/ is
described completely by 
\[ \E\left(\testfn\left(u(x,t)\right)
  \psi\left(u(x+\alpha,t)\right)\right) = \PP\left(\testfn \QQ_\alpha \psi
  \right). \]
The BV chain rule extends to a product rule as follows
\[ \partial_t \left( \testfn \QQ_\alpha\psi \right) =
\overline{\QQ_\alpha\psi} \partial_t \testfn + 
\overline{\testfn}  \partial_t \QQ_\alpha \psi, \]
where 
\[ \overline{\testfn}(x,t) =\frac{1}{2}\left( \testfn(x,t_-)+ \testfn(x,t_+)
\right). \]
We combine the BV chain rule and the conservation law to
obtain 
\[ \partial_t \testfn(u(x,t)) = -[\testfn][f] \partial_x
u(x,t),\quad  \partial_t\psi(u(x+\alpha,t)) = -[\psi][f] \partial_x
u(x+\alpha,t). \] 
Therefore,
\be
\label{twof1}
\partial_t \left(\testfn(u(x,t) \psi(x+\alpha,t) \right) =-\overline{\psi}
[\testfn'][f'] \partial_x u(x,t) - \overline{\testfn} [\psi'][f']\partial_x
u(x+\alpha,t). 
\ee
We compute the expected value of each of these terms in turn. First,
as earlier, the main assumption is
\ba
\label{twof2}
\lefteqn{ \E \left( \overline{\psi} [\testfn'][f'] \partial_x u(x,t) \right)
}\\
\nn
&&
= \lim_{h \to 0} \frac{1}{h} \E \left(
  \frac{f(\up)-f(\umh)}{\up-\umh}\left(\testfn(\up) -\testfn(\umh)\right)
  \left( \frac{\psip + \psimh}{2} \right) \right), 
\ea
where $\psip=\psi(u(x+\alpha,t))$ and $\psimh=\psi(u(x+\alpha-h,t))$. 

We can do away with the mean value. To be explicit, we write the above expectation as
\be
\label{twof3}
\int_\R\int_\R p(y) q_h(y,z) \left(
    \frac{f(z)-f(y)}{z-y} \left(\testfn(z)-\testfn(y)\right) \right)
\frac{\left(\QQ_{\alpha+h}+\QQ_\alpha\right) \psi(z)}{2}\,dz \, dy.
\ee
We see that the last term converges to $\QQ_\alpha \psi(y)$ as
$h \to 0$. Therefore, 
\ba
\nn
\lefteqn{ \E \left( \overline{\psi} [\testfn'][f'] \partial_x u(x,t) \right)}
\\
\nn &&
= \lim_{h \to 0} \frac{1}{h} \E \left(
  \frac{f(\up)-f(\umh)}{\up-\umh}\left(\testfn(\up) -\testfn(\umh)\right)
  \QQ_\alpha \psi(\up)\right) \\
\label{lim3} && 
= \lim_{h \to 0} \frac{1}{h} \E \left( \frac{f(\up)-f(\umh)}{\up-\um}
  \left( \testfn \QQ_\alpha \psi (\up) - \testfn \QQ_\alpha \psi(\umh)\right)
\right. \\
\label{lim4} &&
- \left. \testfn(\umh) \frac{f(\up)-f(\umh)}{\up-\umh}
  \left( \QQ_\alpha \psi (\up) - \QQ_\alpha \psi(\umh)\right)
\right), \\
\label{twof4}
&& = -\PP \left(\genb \left(\testfn \QQ_\alpha \psi\right) - \testfn \genb \QQ_\alpha
  \psi \right), 
\ea
where we used \qref{lim1} to compute the limits in \qref{lim3} and \qref{lim4}. 

We now compute the second term on the right hand side of
\qref{twof1}. As in the calculation above, we can do away with the
mean value, and we have
\ba
\nn
\lefteqn{ \E \left(\overline{\testfn} [\psi'][f'] \partial_x u(x+\alpha,t) \right)}
\\
\nn &&
= \lim_{h \to 0} \int_\R \int_\R \int_\R p(y) \testfn(y) q_{\alpha-h}
(y,z) 
\frac{q_h(z,w)}{h} \left( \frac{f(w)-f(z)}{w-z}
  \left(\psi(w)-\psi(z) \right) \right) \, dw \, dz\, dy. 
\ea
As in the computation of \qref{lim1} we find that the limit of the
innermost integral is $ \genb \psi(z)$. Therefore, 
\be
\label{twof6}
\E \left(\overline{\testfn} [\psi'][f'] \partial_x u(x+\alpha,t) \right) = -\PP\left(
  \testfn \QQ_\alpha \genb \psi \right). 
\ee
We combine \qref{twof1}, \qref{twof4} and \qref{twof6} to find
\be
\label{twof7}
\partial_t \left( \PP \left( \testfn \QQ_\alpha \psi \right) \right)=  \PP
\left(\genb \left(\testfn \QQ_\alpha \psi\right) -
\testfn \genb \QQ_\alpha   \psi  + \testfn \QQ_\alpha \genb \psi \right). 
\ee
But the left hand side is simply
\be
\label{twof8}
 (\partial_t \PP) \testfn \QQ_\alpha \psi + \PP\left( \testfn \partial_t
  \QQ_\alpha \psi \right), 
\ee
and by \qref{onef3}
\be
\label{twof9}
 (\partial_t \PP) \testfn \QQ_\alpha \psi = \PP \genb \left( \testfn \QQ_\alpha \psi \right). 
\ee 
We now combine \qref{twof7}, \qref{twof8} and \qref{twof9} to obtain
\be
\label{twof10}
\PP\left( \testfn \partial_t \QQ_\alpha \psi \right) = \PP \left(\testfn \left( \QQ_\alpha  \genb -\genb \QQ_\alpha \right) \psi \right). 
\ee
This equation holds for all $\testfn$ and $\psi$ in the domain of $\gena$. Thus, we
may write
\be
\label{twof11}
\partial_t \QQ_\alpha = \left( \QQ_\alpha  \genb - \genb  \QQ_\alpha \right) =
[\QQ_\alpha,\genb],
\ee
and in the limit $\alpha \dnto 0$ we have
\be
\label{twof12}
\partial_t \gena = [\gena,\genb]. 
\ee

\section{Hopf's method}
In this section we derive the Lax equation \qref{eq:kinetic} following Hopf's method~\cite{Hopf-stat}. This method was used by Chabanol and Duchon to derive their kinetic equations. We show that the calculation works for any convex flux $f$. 

Though our calculations are mainly formal, we begin with the canonical framework of 
\S~\ref{sec:closure}.  Recall that the initial data $u_0$ has law $\mu_0$ on $\Omega$. Let $u ( t ; u_0 ) : [ 0, \infty ) \rightarrow \Omega$ be a weak solution of
(\ref{eq:sc1}) with initial data $u_0$, which induces a sequence
of probability measures $( \mu_t )_{t > 0}$ on $\Omega$. To derive an equation
for the flow of measures $( \mu_t )_{t > 0}$, make the assumption that $u$ is
differentiable in $t$ so that the conservation law can be written for all
$\varphi \in C_c^{\infty} ( \mathbb{R} )$ as
\begin{equation}
  \label{eq:HopfDynamicalSystem} \partial_t \langle u ( t ; u_0 ), \varphi
  \rangle - \left\langle f ( u ( t ; u_0 ) ), \varphi' \rangle = 0. \right.
\end{equation}
Here, $' = \partial_x$ and $\langle \cdot, \cdot \rangle$ is the standard
duality pairing.  Define the Hopf characteristic functional $\hat{\mu}_t$ of the law $\mu_t$
\begin{equation}
  \hat{\mu}_t ( \varphi ) = \int_{\Omega} e^{i \langle u, \varphi \rangle}
  \mu_t ( d u )
\end{equation}
which evolves according to
\begin{align}
  \partial_t \hat{\mu}_t ( \varphi ) = \partial_t \int_{\Omega} e^{i \langle
  u, \varphi \rangle} \mu_t ( d u ) & = \partial_t \int_{\Omega} e^{i
  \langle u ( t ; u_0 ), \varphi \rangle} \mu_0 ( d u_0 ) \nonumber\\
  & = \int_{\Omega} i \left\langle f ( u ( t ; u_0 ), \varphi'
  \right\rangle e^{i \langle u ( t ; u_0 ), \varphi \rangle} \mu_0 ( d u_0 )
  \nonumber\\
  & = \int_{\Omega} i \left\langle f ( u ), \varphi' \right\rangle e^{i
  \langle u, \varphi \rangle} \mu_t ( d u ) . 
  \label{eq:HopfFunctionalEvolution}
\end{align}
Any set of probability measures $( \mu_t )_{t > 0}$ on $\Omega$ that satisfies
(\ref{eq:HopfFunctionalEvolution}) for all $\varphi \in C_c^{\infty} (
\mathbb{R} )$ is defined to be a {\emph{statistical solution}} of the scalar
conservation law---in particular, the flow of measures generated by the
entropy solution is a statistical solution. Assuming that all moments of
$\mu_t$ are finite, the exponential can be expanded in series (denoting
$d\mathbf{x}_n = \Pi_{j = 1}^n d x_j$) as
\[ e^{i \langle u, \varphi \rangle} = \sum_{n = 0}^{\infty} \frac{i^n}{n!}
   \int_{\mathbb{R}^n} \prod_{j = 1}^n u ( x_j ) \varphi ( x_j )
   d\mathbf{x}_n . \]
Let
\[ \mathbb{E}_{\mu_t} \left[ \prod_{i = 1}^n \testfn_i ( u ( x_i ) ) \right] =
   \int_{\Omega} \prod_{i = 1}^n \testfn_i ( u ( x_i ) ) \mu_t ( d u ) . \]
Substituting the expansion for the exponential into
(\ref{eq:HopfFunctionalEvolution}) yields the infinite hierarchy
\begin{align}
  & \sum_{n = 0}^{\infty} \frac{i^n}{n!} \int_{\mathbb{R}^n} \partial_t
  \mathbb{E}_{\mu_t} \left[ \prod_{j = 1}^n u ( x_j ) \varphi ( x_j ) \right]
  d\mathbf{x}_n \nonumber\\
  & = \sum_{n = 0}^{\infty} \frac{i^{n + 1}}{n!} \int_{\mathbb{R}^{n +
  1}} \mathbb{E}_{\mu_t} \left[ f ( u ) \varphi' ( x ) \prod_{j = 1}^n u ( x_j )
  \varphi ( x_j ) \right] d x d\mathbf{x}_n .  \label{eq:HopfHierarchy1}
\end{align}
We simplify this hierarchy as follows. If $g:\R^n\to \R$ satisfies 
   $g ( x_1, \ldots, x_n ) = g ( x_{\sigma ( 1 )}, \ldots x_{\sigma ( n )}
  )$ for all permutations $\sigma$ of $\{ 1, \ldots, n \}$, then
  \begin{equation}
    \label{eq:CalculusTrick} \int_{\mathbb{R}^n} g ( x_1, \ldots, x_n )
    d\mathbf{x}_n = n! \int_{x_1 < x_2 < \cdots < x_n} g ( x_1, \ldots, x_n
    ) d\mathbf{x}_n .
  \end{equation}
We choose
\[ g ( x_1, \ldots, x_{n + 1} ) = \frac{1}{n} \sum_{k = 1}^{n + 1} \mathbb{E}_{\mu_t}
   \left[ f ( u ( x_k ) ) \varphi' ( x_k ) \prod_{j \neq k} u ( x_j ) \varphi
   ( x_j ) \right], \]
in (\ref{eq:CalculusTrick}) and substitute in (\ref{eq:HopfHierarchy1}) to
obtain
\begin{align}
  & \sum_{n = 0}^{\infty} i^n \int_{x_1 < \cdots < x_n} \partial_t
  \mathbb{E}_{\mu_t} \left[ \prod_{j = 1}^n u ( x_j ) \varphi ( x_j ) \right]
  d\mathbf{x}_n \nonumber\\
  & = \sum_{n = 0}^{\infty} i^{n + 1} \int_{x_1 < \cdots < x_{n + 1}}
  \sum_{k = 1}^{n + 1} \mathbb{E}_{\mu_t} \left[ f ( u ( x_k ) ) \varphi' ( x_k )
  \prod_{j \neq k} u ( x_j ) \varphi ( x_j ) \right] d\mathbf{x}_n . 
  \label{eq:HopfHierarchy2}
\end{align}
Re-indexing the l.h.s. of (\ref{eq:HopfHierarchy2}) and subtracting the r.h.s.
gives 
\begin{align}
  \sum_{n = 1}^{\infty} i^n \int_{x_1 < \cdots < x_n} \left\{ \right. &
  \partial_t \mathbb{E}_{\mu_t} \left[ \prod_{j = 1}^n u ( x_j ) \right] \prod_{j =
  1}^n \varphi ( x_j ) \nonumber\\
  & \left. - \sum_{k = 1}^n \mathbb{E}_{\mu_t} \left[ f ( u ( x_k ) ) \prod_{j \neq
  k} u ( x_j ) \right] \varphi' ( x_k ) \prod_{j \neq k} \varphi ( x_j )
  \right\} d\mathbf{x}_n = 0.  \label{eq:HopfHierarchy3}
\end{align}

Finally, the statistical hierarchy (\ref{eq:HopfHierarchy3}) can be
considerably simplified with the closure assumption that $( \mu_t )_{t
> 0}$ is the law of a stationary Feller process with the one and two-point operators
$\mathcal{P}$ and $( \mathcal{Q}_{h} )_{h > 0}$ defined in \qref{onefn5} and \qref{onefn6}. With $h_i = x_i - x_{i - 1}$ and $\mathcal{Q}_i = \mathcal{Q}_{h_i}$ for $i = 2, \ldots, n$,
\begin{align}
  \mathbb{E}_{\mu_t} & \left[ \prod_{i = 1}^n \testfn_i ( u ( x_i ) ) \right]
  \nonumber\\
  & = \int_{\mathbb{R}} p ( d u_1, t ) \testfn_1 ( u_1 ) \int_{\mathbb{R}}
  q_{h_2} ( u_1, d u_2, t ) \testfn_2 ( u_2 ) \cdots \int_{\mathbb{R}} q_{h_n} (
  u_{n - 1}, d u_n, t ) \testfn_n ( u_n ) \nonumber\\
  & = \mathcal{P} \testfn_1 \mathcal{Q}_2 \testfn_2 \cdots \mathcal{Q}_n \testfn_n
  . \nonumber
\end{align}
Assume the transition measures $q_h$ are differentiable in $h$. Transferring
the derivative on the test function $\varphi' ( x_k )$ onto $q_{h_k} ( u_{k -
1}, d u_k )$ in (\ref{eq:HopfHierarchy2}) and using integration by parts, the
statistical hierarchy is
\begin{align}
  \sum_{n = 1}^{\infty} i^n \int_{x_1 < \cdots < x_n} \left\{ \right. &
  \partial_t \mathbb{E}_{\mu_t} \left[ \prod_{j = 1}^n u ( x_j ) \right] \nonumber\\
  & \left. + \sum_{k = 1}^n \partial_{x_k} \mathbb{E}_{\mu_t} \left[ f ( u ( x_k )
  ) \prod_{j \neq k} u ( x_j ) \right] \right\} \prod_{j = 1}^n \varphi ( x_j
  ) d\mathbf{x}_n = 0. \nonumber
\end{align}
\[  \]
Note that the boundary terms in the previous equation vanish due to
cancellation of terms and the fact that $\varphi \in C_c^{\infty} (
\mathbb{R} )$ has compact support. The density of tensor products of the form
$\phi ( x_1, \ldots, x_n ) = \prod_{j = 1}^n \varphi ( x_j )$ in the space of
test functions on $\{ x = ( x_1, \ldots, x_n ) \in \mathbb{R}^n : x_1 <
\cdots < x_n \}$ implies that for $n \in \mathbb{N}$ and $x_1 < x_2 < \cdots
< x_n$, the following infinite set of equations hold:
\begin{equation}
  \partial_t \mathbb{E}_{\mu_t} \left[ \prod_{j = 1}^n u ( x_j ) \right] = - \sum_{k =
  1}^n \partial_{x_k} \mathbb{E}_{\mu_t} \left[ f ( u ( x_k ) ) \prod_{j \neq k} u (
  x_j ) \right] .
\end{equation}
In terms of the operators $\mathcal{Q}_i' = \partial_{h_i} \mathcal{Q}_i$
(denoting $h_1 = x_1$, $\mathcal{Q}_1 = \mathcal{P}$, $\mathcal{Q}_{n + 1} =
I$) this is
\begin{align}
  \partial_t \mathbb{E}_{\mu_t} \left[ \prod_{j = 1}^n u ( x_j ) \right] = \sum^n_{k =
  1} \{ & \mathcal{P} u \mathcal{Q}_2 u \cdots \mathcal{Q}_k f ( u )
  \mathcal{Q}'_{k + 1} u \cdots \mathcal{Q}_n u \nonumber\\
  & - \mathcal{P} u \mathcal{Q}_2 u \cdots \mathcal{Q}'_k f ( u )
  \mathcal{Q}_{k + 1} u \cdots \mathcal{Q}_n u \}, \nonumber
\end{align}
or equivalently, in terms of the generator $\mathcal{A}$ (using
$\mathcal{Q}_i' = \mathcal{A} \mathcal{Q}_i$)
\[ \partial_t \mathbb{E}_{\mu_t} \left[ \prod_{j = 1}^n u ( x_j ) \right] = \sum^n_{k =
   1} \mathcal{P} u \mathcal{Q}_2 u \cdots \mathcal{Q}_k \left[ f ( u ),
   \mathcal{A} \right] \mathcal{Q}_{k + 1} u \cdots \mathcal{Q}_n u. \]
One obtains the evolution equation for the 1-point function by taking $h_i
\rightarrow 0$ for all $i$ (i.e., $\mathcal{Q}_i = I$ for all $i$), or the
equation for the 2-point function by taking $h_{l + 1} = \alpha$ for $l \in \{
1, \ldots, n - 1 \}$ and $h_i \rightarrow 0$ for all $i \neq l + 1$ (i.e.,
$\mathcal{Q}_{l + 1} = \mathcal{Q}_{\alpha}$ and $\mathcal{Q}_i = I$ for all
$i \neq l + 1$):
\begin{equation}
  \label{eq:Hopf1-PointSum} \partial_t \mathbb{E}_{\mu_t} \left[ ( u^{} ( x_1 ) )^n
  \right] = \sum^n_{k = 1} \mathcal{P} u^{k - 1} \left[ f ( u ), \mathcal{A}
  \right] u^{n - k}
\end{equation}
\begin{align}
  \partial_t \mathbb{E}_{\mu_t} \left[ ( u ( x_1 ) )^l ( u ( x_1 + \alpha ) )^{n - l}
  \right] = & \sum^l_{k = 1} \mathcal{P} y^{k - 1} \left[ f ( y ),
  \mathcal{A} \right] y^{l - k} \mathcal{Q}_{\alpha} y^{n - l} \nonumber\\
  & + \sum^n_{k = l + 1} \mathcal{P} y^l \mathcal{Q}_{\alpha} y^{k - ( l +
  1 )} \left[ f ( y ), \mathcal{A} \right] y^{n - k} 
  \label{eq:Hopf2-PointSum}
\end{align}
Using the expression \qref{eq:gena1} for the generator $\mathcal{A}$,
(\ref{eq:Hopf1-PointSum}) and (\ref{eq:Hopf2-PointSum}) simplify:
\begin{align}
  \partial_t \mathcal{P} y^n & = - \sum_{k = 1}^n \mathcal{P} y^{k - 1}
  \left\{ \drift ( y ) f' ( y ) y^{n - k} + \int_{\mathbb{R}} \jump ( y, dz ) ( f ( z
  ) - f ( y ) ) z^{n - k} \right\} \nonumber\\
  & = - \mathcal{P} \left\{ \drift ( y ) f' ( y ) ( y^n )' + \int_{\mathbb{R}}
  \jump ( y, d z ) \frac{f ( z ) - f ( y )}{z - y} ( z^n - y^n ) \right\}
  \nonumber
\end{align}
\begin{align}
  & \partial_t ( \mathcal{P} y^l \mathcal{Q} y^{n - l} ) \nonumber\\
  & = - \sum_{k = 1}^l \mathcal{P} y^{k - 1} \left\{ \drift ( y ) f' ( y ) y^{l
  - k} \mathcal{Q}_{\alpha} y^{n - l} + \int_{\mathbb{R}} \jump ( y, d z ) ( f (
  z ) - f ( y ) ) z^{l - k} \mathcal{Q}_{\alpha} z^{n - l} \right\}
  \nonumber\\
  & - \sum_{k = l + 1}^n \mathcal{P} y^l \mathcal{Q}_{\alpha} y^{k - ( l +
  1 )} \left\{ \drift ( y ) f' ( y ) y^{n - k} + \int_{\mathbb{R}} \jump ( y, d z ) (
  f ( z ) - f ( y ) ) z^{n - k} \right\} \nonumber\\
  & = - \mathcal{P} \left\{ \drift ( y ) f' ( y ) ( y^l )' \mathcal{Q}_{\alpha}
  y^{n - l} + \int_{\mathbb{R}} \jump ( y, d z ) \frac{f ( z ) - f ( y )}{z - y}
  ( z^l - y^l ) \mathcal{Q}_{\alpha} z^{n - l} \right\} \nonumber\\
  & - \mathcal{P} y^l \mathcal{Q}_{\alpha} \left\{ \drift ( y ) f' ( y ) ( y^{n
  - l} )' + \int_{\mathbb{R}} \jump ( y, d z ) \frac{f ( z ) - f ( y )}{z - y} (
  z^{n - l} - y^{n - l} ) . \right\} \nonumber
\end{align}
In terms of the operator $\mathcal{B}$ defined in \qref{eq:genb}
the above expressions read
\[ \partial_t \mathcal{P} y^n = \mathcal{P} \mathcal{B} y^n, \quad
   \partial_t ( \mathcal{P} y^l \mathcal{Q}_{\alpha} y^{n - l} ) = 
   \mathcal{P} ( \mathcal{B} ( y^l \mathcal{Q}_{\alpha} y^{n - l} ) - y^l
   \mathcal{B} \mathcal{Q}_{\alpha} y^{n - l} + y^l \mathcal{Q}_{\alpha}
   \mathcal{B} y^{n - l} ) . \]
Since $\mathcal{B}$ is a linear operator, we approximate $\testfn, \psi \in C_c^{\infty} ( \mathbb{R} )$ by polynomials to obtain the evolution
equation for the 1- and 2-point operators: 
\[ \partial_t \mathcal{P} \testfn = \mathcal{P} \mathcal{B} \testfn, \qquad
   \partial_t ( \mathcal{P} \testfn \mathcal{Q}_{\alpha} \psi ) = \mathcal{P} (
   \mathcal{B} ( \testfn \mathcal{Q}_{\alpha} \psi ) - \testfn \mathcal{B}
   \mathcal{Q}_{\alpha} \psi + \testfn \mathcal{Q}_{\alpha} \mathcal{B} \psi ) .
\]
Therefore we have the evolution $\partial_t \mathcal{A} = [ \mathcal{A},
\mathcal{B} ]$ exactly as in \S~\ref{sec:bv}.

\section{Groeneboom's solution}
\label{sec:sss}
In this section we verify that the generator
\be
\label{eq:sss1}
\gena(t) \testfn (y) = \frac{1}{t}\testfn'(y) + \int_{-\infty}^y \frac{1}{t^{1/3}}\jumpex(yt^{1/3},zt^{1/3}) \left(\testfn(z)-\testfn(y)\right)\, dz
\ee
satisfies the Lax equation \qref{eq:kinetic} when $f(u)=u^2/2$ and $\jumpex$ is given by \qref{eq:groen2}--\qref{eq:groen4}. In this case, the evolution of the drift \qref{eq:evolb} is simply $\dot\drift =-\drift^2$. It is clear that $\drift(y,t)= t^{-1}$ is a solution. The equation for the jump density now takes the form
\be
\label{eq:sss2}
\partial_t \jump(y,z,t)  -\frac{1}{2t}(y-z)\left(\partial_y \jump -\partial_z \jump\right) = Q(\jump,\jump) 
\ee
with the collision kernel
\ba
\label{eq:sss3}
Q(\jump,\jump)(y,z,t) &&= \frac{y-z}{2}\int_z^y \jump(y,w,t)\jump(w,z,t)\, dw\\
\nonumber
&& 
- \jump(y,z,t)\int_{-\infty}^z \frac{y-w}{2}\jump(z,w,t)\, dw\\
\nonumber
&& -\jump(y,z,t)\int_{-\infty}^y \frac{w-z}{2} \jump(y,w,t)\jump(y,w,t)\, dw.
\ea
We substitute the ansatz 
\be
\label{eq:sss4}
u=yt^{1/3}, \quad v =zt^{1/3},\quad s=u-v, \quad \jump(y,z,t) = t^{-1/3}\jumpex(u,v),
\ee
in \qref{eq:sss2} and collect terms. We must then verify that
\ba
\label{eq:sss5}
\lefteqn{-\frac{2}{3} + \left(u-\frac{v}{3}\right)\frac{J'(v)}{J(v)} + \left(\frac{u}{3}-v\right)\frac{J'(u)}{J(u)} -\frac{4}{3}\frac{sK'(s)}{K(s)}}\\
\nonumber
&&
=s\frac{K*K}{K}(s) -\left(s \frac{K*J}{J}(v) + \frac{J*(xK)}{J}(v)) \right) + \left(\frac{J*(xK)}{J}(u) - s \frac{K*J}{J}(u)\right).
\ea
Here $x$ plays the role of a dummy variable in  the convolutions $J*(xK)$. Explicitly, $J*(xK)(u) = \int_{\R} J(u-x) xK(x) \, dx$. Note also that the terms in  \qref{eq:sss5} are evaluated at the arguments $u$, $v$ or $s=u-v$ respectively. The functions $J$ and $K$ are related by the following identities:
\ba
\label{eq:m1}
x^2J  &=& K*J + J' \\
\label{eq:m2}
\frac{3}{2}J*(xK) &=& x\left(K*J\right) + J \\
\label{eq:m3} 
x^3K & = &  3x(K*K) + 4x K' + 2K.
\ea
(The argument of each function in these identities is $x$). We substitute these identities in \qref{eq:sss5}, and collect the three terms corresponding to the arguments $u$,$v$ and $s$ and find that (remarkably!) each of them reduces to a polynomial, and the sum of these three polynomials vanishes.

The identities \qref{eq:m1}--\qref{eq:m3} are proven using the Laplace transform, \qref{eq:groen3}, and the definition $\Ai''(q) =q \Ai(q)$. First, we observe that $l =j'/j =-\Ai'/\Ai$ solves the Riccati equation
\be
\label{eq:m5}
l' = -q + l^2.
\ee
As a consequence of the definition of $k$ in \qref{eq:groen3} we have
\be
\label{eq:m5a}
l'=k/2.
\ee
Therefore,  we also have
\be
\label{eq:m6}
k'=2l'' = -2 + 4 ll' = -2(1-lk).
\ee
Thus, $l$ and $k$ solve an autonomous system. Equation \qref{eq:m5} may be rewritten in terms of $j$ and $k$ as
\be
\label{eq:m4}
j'' -(q + k)j =0.
\ee
This is equivalent to \qref{eq:m1}. Next, if we rewrite \qref{eq:m6} in terms of $j$ and $k$, we find immediately that
\be
\label{eq:m7}
\frac{3}{2}jk' =(jk)'-j.
\ee
This is equivalent to \qref{eq:m2}. Finally, we differentiate \qref{eq:m6} twice and use \qref{eq:m5} to eliminate $l$ to find 
\ba
\label{eq:m8}
k'' &=& k^2 -4l(1-lk)\\
\label{eq:m9}
k''' &=& 3(k^2)' + 4qk'+ 2k,
\ea
which is equivalent to \qref{eq:m3}. 

Here is the \Painleve\/ property: differentiate \qref{eq:m5} to find $l''=2l^3 -2lq-1$. Now rescale $\tau=-q2^{1/3}$, $w(\tau)=2^{-1/3} l(q)$ to see that $w$ solves the second \Painleve\/ equation with parameter $1/2$~\cite{Ablowitz}.
\be
\label{eq:P2}
\frac{d^2w}{d\tau^2} = 2w^3 + w\tau + \frac{1}{2}.
\ee

\section{Acknowledgements}
This material is based upon work supported by the National Science Foundation
under grants DMS 06-05006 and DMS 07-48482. We have benefited greatly from discussions with Mark Adler, Percy Deift, Irene Gamba, John Hartigan, David Kinderlehrer,  Dave Levermore, Mokshay Madiman, Jonathan Mattingly, Scott McKinley, Irina Nenciu, Christian Pfrang and David Pollard. We thank them all. Most of all, we thank Bob Pego for the pleasure of continued collaboration.  Part of this work was completed while the first author was visiting the Indian Statistical Institute, Kolkata.

\bibliographystyle{siam}
\bibliography{ms1}
\end{document}